\documentclass[a4paper,fleqn]{cas-sc}

\usepackage[numbers]{natbib}
\usepackage{amsthm}
\usepackage{bm}
\usepackage{listings}
\usepackage{comment}
\usepackage[section]{placeins}
\newcommand{\pkgname}{PyBlockEncode}
\newcommand{\libname}{pyblockencode}
\newcommand{\repourl}{https://github.com/UW-ERSL/PyBlockEncode}

\newcommand{\Ublk}{\mathcal{U}}
\newcommand{\Sc}{S_c}
\newcommand{\Scd}{S_c^\dagger}
\newcommand{\kron}{\otimes}
\newcommand{\norm}[1]{\ensuremath{\|#1\|}}

\newcommand{\Rchi}{R_{\chi}}
\newcommand{\Ksc}{\mathbf{K}_{2D,\mathrm{sc}}}
\newcommand{\Kel}{\mathbf{K}_{2D,\mathrm{el}}}
\newcommand{\asc}{\alpha_{2D,\mathrm{sc}}}
\newcommand{\ael}{\alpha_{2D,\mathrm{el}}}
\newtheorem{proposition}{Proposition}

\definecolor{codegray}{gray}{0.95}
\definecolor{codecomment}{rgb}{0.0,0.45,0.0}
\definecolor{codekw}{rgb}{0.0,0.0,0.6}
\begin{document}

\let\WriteBookmarks\relax

\shorttitle{Block encodings for periodic two-phase finite element operators}
\shortauthors{K. Suresh}

\title[mode = title]{Quantum Block Encodings for Periodic Two-Phase Finite
  Element Operators: 2D Poisson and 2D Elasticity}

\author[1]{Krishnan Suresh}[
   orcid=0000-0002-9688-9697
  ]
\cormark[1]
\ead{ksuresh@wisc.edu}
\credit{Conceptualization, Methodology, Software, Validation,
  Writing -- original draft}

\affiliation[1]{organization={University of Wisconsin--Madison,
                              Department of Mechanical Engineering},
            city={Madison},
            state={WI},
            postcode={53706},
            country={USA}}

\cortext[1]{Corresponding author}

\begin{abstract}
Quantum algorithms require embedding linear operators into unitaries using block encodings. The costs associated with these block encodings determine whether a quantum speedup is achieved. Efficient \emph{shift decomposition} encodings have been proposed for 2D homogeneous scalar operators. Here, we present exact block encodings for 2D homogeneous elasticity, 2D two-phase bilinear Poisson, and 2D two-phase elasticity periodic finite-element operators. 

For 2D homogeneous elasticity, the resulting linear combination of unitaries (LCU)
has $L = 17$ terms for every Poisson ratio $\nu$ and every mesh resolution, and
the subnormalization is the closed form
$\alpha = E(33+\nu)/\bigl[6(1-\nu^{2})\bigr]$, exceeding $\|\mathbf{K}\|_{2}$ by the
resolution-independent factor $(33+\nu)/24$.

To address two-phase periodic microstructures, we introduce two oracles with
distinct roles: a node-to-element oracle, built from cyclic shifts, that carries
a nodal index to each of the four incident element indices, and a microstructure-specific material
oracle that marks phase membership with a sign. This yields 25 and 57 LCU terms for two-phase Poisson and elasticity, respectively; the latter reduces to 49 at $\nu = 1/3$. The term counts are independent of mesh resolution, volume fraction, and phase contrast. The subnormalizations, provided in closed form, are independent of mesh resolution and volume fraction. In the scalar case, the tight bound $\alpha = \|\mathbf{K}\|_\infty$ is achieved whenever a node lies entirely in the stiffer phase.  In the elasticity case that bound is not attained, since the shear coupling
contributes entries of both signs to a row. An open-source implementation is available at \url{\repourl}.
\end{abstract}


\begin{keywords}
block encoding \sep linear combination of unitaries \sep shift decomposition
\sep computational homogenization \sep plane-stress elasticity \sep
two-phase microstructure
\end{keywords}

\maketitle

\section{Introduction}
\label{sec:intro}
 
Quantum algorithms for linear systems, differential equations, and matrix
functions promise substantial speedups for large-scale computational
problems~\cite{harrow2009quantum, childs2017quantum,costa2022optimal, gilyen2019quantum,berry2015simulating}.
 Modern algorithmic formulations require the underlying operator
$\mathbf{A}$ to be embedded in a larger unitary $\Ublk$ as a \emph{block
encoding}: an $(a+n)$-qubit unitary whose top-left $2^n \times 2^n$ block is
$\mathbf{A}/\alpha$ for some subnormalization
$\alpha \geq \norm{\mathbf{A}}_2$.
Constructing that encoding is the first step of every modern pipeline. The block encoding cost determines whether a quantum speedup is achieved. A general-purpose encoder working from the entries of $\mathbf{A}$ alone, such as FABLE~\cite{camps2022fable}, costs $\mathcal{O}(4^{n})$ gates at
$\alpha = 2^{n}\max_{ij}|A_{ij}|$ for a $2^{n} \times 2^{n}$ matrix, and the
exponential overhead is unavoidable without structure~\cite{kuklinski2025structure}.

Operators in computational mechanics over regular meshes typically have structure. A finite element
discretization on a regular grid is sparse, assembled from one small element
matrix, and expressible in terms of cyclic shifts. A substantial literature, reviewed in Section~\ref{sec:litreview}, uses this structure for scalar 1D and 2D homogeneous operators.

In this work, we construct exact shift-decomposition-based block encodings of three operators: 2D homogeneous elasticity, 2D two-phase bilinear Poisson, and 2D two-phase elasticity. 

Section~\ref{sec:litreview} provides a brief technical background and reviews
the literature. Section~\ref{sec:StructuredEncoding} recalls the
shift-decomposition encoding of the periodic homogeneous Poisson cell, and
Section~\ref{sec:elasticity} derives the same for elasticity.
Section~\ref{sec:microstructure} introduces the two-phase microstructure, the
node-to-element mapping and the material oracle.
Section~\ref{sec:twophase_poisson} and Section~\ref{sec:twophase_elasticity}
generalize the two encodings to two phases. Section~\ref{sec:encoding}
describes the circuit and summarizes the resources,
and Section~\ref{sec:implementation} documents the open-source implementation
and the reproduction of every table reported here.
Appendix~\ref{sec:circuits} draws the four circuits at $m = 2$.

\section{Background}
\label{sec:litreview}
 
\subsection{Block Encoding and LCU}
\label{sec:be}
 
A unitary $\Ublk$ acting on $a + n$ qubits is an $(\alpha, a,
\varepsilon)$-block encoding \cite{gilyen2019quantum} of an $n$-qubit operator $\mathbf{A}$ if
\begin{equation}
\label{eq:be}
\Bigl\|\mathbf{A} - \alpha\,
  \bigl(\langle 0|^{\otimes a} \otimes I\bigr)\,\Ublk\,
  \bigl(|0\rangle^{\otimes a} \otimes I\bigr)
\Bigr\|_2 \le \varepsilon.
\end{equation}
We place the system register on the least-significant qubits, so the encoded
block is the contiguous top-left corner $\Ublk[{:}2^n, {:}2^n]$. Setting
$\varepsilon = 0$ gives an exact encoding. 
 
The canonical construction assembles $\Ublk$ as a \emph{linear combination of
unitaries} (LCU)~\cite{childs2012hamiltonian,berry2015simulating}. Given
\begin{equation}
\label{eq:Pauli}
\mathbf{A} = \sum_{k=0}^{L-1} c_k U_k
\end{equation}
 with unitaries $U_k$ and real coefficients $c_k$, the subnormalization is $\alpha = \sum_k |c_k|$. For
symmetric $\mathbf{A}$, and whenever every $U_k$ is a \emph{generalized
permutation}, carrying one nonzero entry per row,
\begin{equation}
\label{eq:norm-chain}
\norm{\mathbf{A}}_2
  \;\le\; \norm{\mathbf{A}}_\infty = \max_i \sum_j |A_{ij}|
  \;\le\; \alpha = \sum_k |c_k| .
\end{equation}
Every unitary used in this paper is of that form: each $U_k$ has at most
one entry in a given row, so no cancellation between terms can push $\alpha$
below the largest absolute row sum, and $\norm{\mathbf{A}}_\infty$ is a floor
on what the decomposition can attain. 

The subnormalization $\alpha$ controls both the post-selection success
probability $p = \norm{\mathbf{A}x}^2/\alpha^2$ at $\norm{x}_2 = 1$, and the
query complexity of downstream routines. The circuit then has three stages: \textsc{prep} loads
$\sqrt{|c_k|/\alpha}$ into the ancilla register, \textsc{select} applies $U_k$
controlled on the ancilla state $|k\rangle$, and \textsc{unprep} uncomputes
the ancilla, with the sign of $c_k$ absorbed into $U_k$ or carried by a second
preparation. On the $|0\rangle$-ancilla subspace,
\begin{equation}
\label{eq:lcu-block}
\bigl(\langle 0|_{\mathrm{anc}} \otimes I\bigr)\,\Ublk\,
  \bigl(|0\rangle_{\mathrm{anc}} \otimes I\bigr) = \mathbf{A}/\alpha,
\quad
\alpha = \sum_k |c_k| .
\end{equation}
 
Consequently, three quantities determine the cost of block-encoding:
\begin{enumerate}\itemsep1pt
\item The subnormalization $\alpha = \sum_k |c_k|$, which affects the query
  complexity linearly and the post-selection
  probability quadratically.
\item The Toffoli count of \textsc{select}, which, for a generic implementation,
  is $\mathcal{O}(L)$ \cite{babbush2018encoding}.
\item The arbitrary-angle rotation count of \textsc{prep}, which dominates the
  fault-tolerant cost because each rotation costs
  $\mathcal{O}(\log 1/\varepsilon_{\mathrm{syn}})$ $T$ gates at
  precision $\varepsilon_{\mathrm{syn}}$, while everything else is Clifford
  and Toffoli.
\end{enumerate}
The \textsc{prep} register must address all $L$ terms, so its width satisfies
$n_{\mathrm{prep}} \ge \lceil \log_2 L\rceil$. A dense \textsc{prep} on
$n_{\mathrm{prep}}$ qubits uses $2^{n_{\mathrm{prep}}} - 1$ rotations, whether
or not every index carries amplitude.

\subsection{Pauli Encodings}
\label{sec:generic}
 
The common generic approach is LCU expansion in the Pauli basis \cite{suresh2026appliedqc}. The Pauli term count, verified numerically for $m = 2$ to $6$, is
\begin{equation}
\label{eq:pauli-counts}
L_{\mathrm{Pauli}} =
\begin{cases}
 \tfrac34 2^m, & \text{1D Poisson},\\
 \tfrac{9}{16} 4^m, & \text{2D Poisson (FE)},\\
 \tfrac98 4^m - 1, & \text{2D elasticity},\\
 (2^{m+1} - 1)^2, & \text{2D Poisson, 2-phase},\\
  \tfrac{19}{2} 4^m - 11 \cdot 2^m + 1, & \text{2D elasticity, 2-phase},
\end{cases}
\end{equation}
for $m$ qubits per spatial direction. The two 2-phase counts hold at every
contrast, but they are specific to the microstructure of
Figure~\ref{fig:inclusion}; other microstructures may give different counts. The
subnormalization grows quadratically in $m$ for both homogeneous 2D cases. Considerable effort has gone into computing Pauli coefficients faster, through
tensorized recursion~\cite{hantzko2024tensorized}, tree
traversal~\cite{koska2024tree}, symmetry-restricted
enumeration~\cite{pesce2021h2zixy} and the fast Walsh--Hadamard
transform~\cite{georges2025pauli}.

Beyond standard Pauli decompositions, several alternative matrix encoding strategies have been proposed. Lu et al.~\cite{lu2026memory} benchmarked multiple encoding approaches: parallel Walsh--Hadamard expansions, standard LCU methods, and an SVD-based singular value dilation that reduces any real square matrix to an exact two-term LCU. The SVD dilation attains $\alpha = \norm{\mathbf{A}}_2$, but its two factors are dense rather than generalized permutations.  Other general block-encoding frameworks, such as state preparation pairs or oracle-based linear combinations of unitaries~\cite{gilyen2019quantum}, similarly shift the computational complexity to deep state-preparation circuits or unmapped oracle operations unless specific structural assumptions are made. Xu and Hu~\cite{xu2025decomposition} remove the operator encoding entirely; instead they vectorize $\mathbf{K}$ and evaluate the variational cost functions as inner
products through a swap test; the expense moves to state preparation, which they state as an open problem.
 
Kuklinski et al.~\cite{kuklinski2025structure} make the premise of this paper
precise: \emph{efficient block encodings require structure}.

\section{Structured Block Encoding}
\label{sec:StructuredEncoding}
 
On a regular grid, a finite difference (FD) or finite element (FE)
discretization of the Poisson or elasticity operator produces a matrix that is
\emph{sparse} and \emph{translation invariant}: every node carries the same
stencil, built from a \emph{small set of repeated entries}.
\emph{Structured block encoding} is the collection of techniques that use
this structure so that the gate and ancilla counts scale as
$\mathcal{O}(\mathrm{poly}(m))$ rather than $\mathcal{O}(2^{m})$, and the
subnormalization stays constant, for an $m$-qubit operator.
 
Several families of structured block encoding have been proposed. These
include \emph{shift decomposition}, where we write the operator in closed form
as a short sum of \emph{shift} operators and use that sum in an
LCU~\cite{childs2012hamiltonian, kharazi2025explicit, sturm2025efficient,
alkadri2025quantum, hogancamp2026linear}. A closely related recursion expands the Dirichlet stencil in $\mathcal{O}(m)$ terms built from the ladder operators
$\bm\sigma_\pm = (\bm X \pm i\bm Y)/2$~\cite{liu2021variational};
Gnanasekaran and Surana~\cite{gnanasekaran2026efficient} generalize this to a
basis of such operators and recover the decomposition by unitary completion,
at $\mathcal{O}(m)$ terms for the one-dimensional Poisson operator and a
\textsc{select} cost that grows with the term count.  The
\emph{sparse-access} approach places the nonzero values of a structured operator
by \emph{addressing} them directly, through a pair of oracles~\cite{camps2024explicit, sunderhauf2024block, Danz_2026}. Finally, the \emph{element-decomposition} method, which is specific to FE, assembles the
global stiffness matrix from small element matrices~\cite{arora2025implementation, holscher2026end}.

This paper focuses on the \emph{shift decomposition} approach and addresses two limitations in the literature. The first is the scalar restriction. Shift decomposition has been developed for scalar
operators, and the two constructions that address linear elasticity take a different approach: H\"olscher et al.~\cite{holscher2026end} take an LCU over elements, so the term count equals the element count and the
subnormalization grows with the mesh, and Danz et al.~\cite{Danz_2026} build
the stiffness and mass matrices of an elastic structure from general
fixed-point arithmetic oracles. Neither yields a decomposition
whose length is independent of resolution.

The second limitation is the homogeneity restriction. Block encodings of
variable-coefficient operators do exist: Deiml and
Peterseim~\cite{deiml2024quantum} admit an arbitrary spatially varying
coefficient but leave its oracle abstract and the subnormalization 
grows with the ellipticity contrast; Pechan et al.~\cite{pechan2026block}
construct a heterogeneous three-dimensional Poisson encoding from the sparse
local structure of the discretized operator, at a subnormalization bounded
by the largest conductivity in the cell; Yano and
Sato~\cite{yano2026quantum} represent a piecewise coefficient profile by a
controlled Fourier series, which is approximate at a material interface. What
none of them provides is a linear combination of fixed length in closed form,
over a material oracle realized exactly.

Table~\ref{tab:prior} compares the scope of the proposed work against the existing literature.
\begin{table}[H]
\centering
\footnotesize
\setlength{\tabcolsep}{6pt}
\renewcommand{\arraystretch}{1.3}
\caption{Prior block encodings and the present construction.
S scalar, V vector; FD finite difference, FE finite element; Dim spatial
dimension, $d$ any dimension; U uniform, H heterogeneous, 2 two-phase;
Y/N whether the work names a public repository.
$\mathcal{O}(1)$ means constant in $N$; the term count of a shift construction
grows linearly with $d$ for finite differences and as $3^d$ for tensor-product
finite elements.
$\dagger$~also treats non-periodic boundary conditions;
$\ddagger$~holds for constant coefficients;
$\S$~the framework is stated for any $d$, but the stiffness analysis is carried
out in one dimension;
$\star$~the two SVD factors are dense and unstructured, and no efficient circuit
for them is given.
\emph{n/a} marks a quantity that does not apply, a dash one the cited work does
not report. Every $\alpha$ is quoted at unit element length.}
\label{tab:prior}
\begin{tabular}{@{}l c c c c l c c c@{}}
\toprule
work & Op & FE/FD & Dim & Mat & Approach & $L$ & $\alpha$ & code \\
\midrule
Kharazi (2025)~\cite{kharazi2025explicit}
  & S & FD & $d$ & U & shift$^\dagger$
  & $\mathcal{O}(1)$ & $\mathcal{O}(1)$ & Y \\
Sturm (2025)~\cite{sturm2025efficient}
  & S & FD & $d$ & U & shift
  & $\mathcal{O}(1)$ & $\mathcal{O}(1)$ & N \\
Hogancamp (2026)~\cite{hogancamp2026linear}
  & S & FD & $d$ & U & shift$^\dagger$
  & $\mathcal{O}(1)$ & $\mathcal{O}(1)$ & N \\
Mahmud (2026)~\cite{mahmud2026moment}
  & S & FD & $d$ & U & shift $+$ optimality
  & $\mathcal{O}(1)$ & $\mathcal{O}(1)$ & N \\
Gnanasekaran (2026)~\cite{gnanasekaran2026efficient}
  & S & FD & $1^\S$ & U & ladder $+$ completion
  & $\mathcal{O}(m)$ & --- & N \\
Alkadri (2025)~\cite{alkadri2025quantum}
  & S & FE & $d$ & U, H & element LCU
  & $\mathcal{O}(1)^\ddagger$ & $\mathcal{O}(1)^\ddagger$ & N \\
\midrule
Deiml (2025)~\cite{deiml2024quantum}
  & S & FE & $d$ & H & FE $+$ BPX, oracle
  & n/a & $\propto$ contrast & Y \\
Pechan (2026)~\cite{pechan2026block}
  & S & FD & 3 & H & sparse access
  & n/a & $12k_{\max}$ & Y \\
Yano (2026)~\cite{yano2026quantum}
  & S & FD & $d$ & H & diagonal, Fourier
  & n/a & assumed & N \\
\midrule
Danz (2026)~\cite{Danz_2026}
  & V & FE & $1^\S$ & U, H & arithmetic oracle
  & n/a & --- & N \\
Arora (2025)~\cite{arora2025implementation}
  & S & FE & 1 & H & VQLS; element 
  & $N+2$ & n/a & N \\
Lu (2026)~\cite{lu2026memory}
  & S & any & 2 & n/a & coh. VQLS; SVD
  & 2$^\star$ & 1$^\star$ & N \\
H\"olscher (2026)~\cite{holscher2026end}
  & V & FE & 2 & 2 & element $+$ Grover
  & $\mathcal{O}(n_{\mathrm{el}})$ & $n_{\mathrm{el}}\delta$ & Y \\
\midrule
This work
  & S, V & FE & 2 & 2 & shift $+$ reflections
  & $\mathcal{O}(1)$ & $\mathcal{O}(1)$ & Y \\
\bottomrule
\end{tabular}
\end{table}

\subsection{1D Homogeneous Periodic Poisson}
\label{sec:1DPoisson}

Shift decomposition is easiest to see on a 1D operator.
For the 1D Poisson problem on $N = 2^m$ \emph{periodic}
nodes, discretized using linear finite elements of unit element length, leads to the stiffness matrix
\begin{equation}
\label{eq:K1D}
\mathbf{K}_{1D} = \begin{bmatrix}
 2 & -1 &  0 & \cdots & -1 \\
-1 &  2 & -1 & \cdots &  0 \\
\vdots &  & \ddots &  & \vdots \\
-1 &  0 & \cdots & -1 & 2
\end{bmatrix},
\end{equation}
where the corner entries carry the periodic wrap. This can be expressed compactly as
\begin{equation}
\label{eq:K1}
\mathbf{K}_{1D} \;=\; \mathrm{circ}(-1,\,2,\,-1),
\end{equation}
where $\mathrm{circ}(a,b,c)$ carries $a$ on the sub-diagonal, $b$ on the
diagonal and $c$ on the super-diagonal, each off-diagonal band wrapping into
the opposite corner.

Let $\Sc$ be the cyclic shift
\begin{equation}
\label{eq:Sc-1D}
\Sc = \begin{bmatrix}
0 & 0 & 0 & \cdots & 1 \\
1 & 0 & 0 & \cdots & 0 \\
0 & 1 & 0 & \cdots & 0 \\
\vdots &  & \ddots &  & \vdots \\
0 & 0 & \cdots & 1 & 0
\end{bmatrix},
\end{equation}
so that $\Sc: |j\rangle \mapsto |(j{+}1) \bmod N\rangle$. It is a permutation
matrix, hence unitary, and its adjoint $\Scd$ is the backward shift. Thus
Equation~\eqref{eq:K1D} can be expressed as
\begin{equation}
\label{eq:K1D-shift}
\mathbf{K}_{1D} = 2I - \Sc - \Scd .
\end{equation}
This is an exact LCU, and hence a block encoding: three unitaries,
coefficients known in closed form, and a subnormalization $\alpha = 4$ that
does not grow with $N$. The cost is one increment modulo $N$ of depth
$\mathcal{O}(m)$~\cite{draper2000addition,camps2024explicit}; the controlled
ripple-carry increment used here costs $2m-2$ Toffoli, $m$ CNOT and $m-1$ clean ancillas
that are returned to $|0\rangle$ and reused by every shift in the
circuit~\cite{gidney2018halving, cuccaro2004ripple}. 

Table~\ref{tab:poisson1d-compare} compares the cost of encoding $\mathbf{K}_{1D}$ using shift decomposition against an exact Pauli expansion of the same operator at $m = 12$. The gap widens in 2D and for heterogeneous cells. For the 1D cell, the chain $\norm{\mathbf{K}}_2 \le \norm{\mathbf{K}}_\infty \le \alpha$ collapses to
equality, so the shift decomposition is optimal.

\begin{table}[H]
\centering
\footnotesize
\setlength{\tabcolsep}{6pt}
\renewcommand{\arraystretch}{1.2}
\caption{Periodic 1D finite element Poisson cell at $m = 12$ qubits,
$4{,}096$ degrees of freedom. The shift column is measured; the Pauli column
is exact in closed form, $L = \tfrac34 2^m$, $\alpha = m+2$ and
$p_{\mathrm{succ}} = 16/(m+2)^2$, since $\norm{\mathbf{K}_{1D}}_2 = 4$ at
every even $N$. This is the one cell in which the comparison is tabulated;
the four 2D cells follow the same pattern more sharply, at the term counts of
Equation~\eqref{eq:pauli-counts}.}
\label{tab:poisson1d-compare}
\begin{tabular}{@{}lrr@{}}
\toprule
quantity & Pauli & shift \\
\midrule
terms $L$                          & $3{,}072$ & $3$ \\
\textsc{prep} qubits               & $12$      & $2$ \\
\textsc{prep} slots               & $4{,}096$ & $4$ \\
\textsc{select} Toffoli            & $3{,}072$ & $48$ \\
$\alpha$                           & $14.0$    & $4.000$ \\
$p_{\mathrm{succ}}$                & $0.082$   & $1.00$ \\
\bottomrule
\end{tabular}
\end{table}

\subsection{2D Homogeneous Periodic Poisson}
\label{sec:2DPoisson}

For a 2D periodic Poisson problem consisting of a $N \times N$ grid of bilinear $Q_4$ elements, the finite element operator separates into 1D factors~\cite{deville2002high, van2000ubiquitous, alkadri2025quantum},
\begin{equation}
\label{eq:K2D-sep}
\Ksc = \mathbf{K}_{1D} \kron \mathbf{M}_{1D}
                + \mathbf{M}_{1D} \kron \mathbf{K}_{1D},
\end{equation}
where $\mathbf{M}_{1D} = \tfrac{1}{6}\,\mathrm{circ}(1,\,4,\,1)$ is the consistent mass matrix. Throughout this paper, the left factor of a Kronecker
product acts on the $x$ register, the next on the $y$ register; $\kron$ thus
lists registers from least to most significant, the reverse of the matrix
Kronecker product. Both factors are three-band
circulants, so both admit the shift decomposition of
Section~\ref{sec:1DPoisson},
\begin{equation}
\label{eq:M1-shift}
\mathbf{M}_{1D} = \tfrac{4}{6}I + \tfrac{1}{6}\Sc + \tfrac{1}{6}\Scd ,
\end{equation}
and Equation~\eqref{eq:K2D-sep} expands over the nine unitaries
$\{I, \Sc, \Scd\}^{\kron 2}$. This approach to the periodic scalar cell is due to
Sturm and Schillo~\cite{sturm2025efficient} and, for tensor-product finite
elements, to Qu-FEM~\cite{alkadri2025quantum}.

Merging the coefficients that land on the same pair of shifts,
\begin{equation}
\label{eq:K2D}
\Ksc = \tfrac{8}{3}\,I \kron I
  \;-\; \tfrac{1}{3}\!\!\sum_{(U_p,U_q) \neq (I,I)}\!\! U_p \kron U_q ,
\end{equation}
nine terms with
\begin{equation}
\label{eq:alpha2D}
\asc = \tfrac{8}{3} + 8\cdot\tfrac{1}{3} = \tfrac{16}{3}.
\end{equation}
As a nodal stencil, Equation~\eqref{eq:K2D} is the bilinear nine-point
Laplacian, and $\asc = \norm{\Ksc}_\infty$ exactly.

The shift decomposition is independent of mesh resolution, against
$\tfrac{9}{16}4^m$ terms at $\alpha = (m^2+2m+8)/3$ for a generic Pauli
expansion of the same operator.

\section{2D Homogeneous Elasticity}
\label{sec:elasticity}

In this section, we extend the above results to 2D homogeneous periodic elasticity. Ordering rows and
columns by displacement component gives the familiar $2 \times 2$ block
\begin{equation}
\label{eq:K-block}
\Kel = \begin{bmatrix}
  \mathbf{K}_{xx} & \mathbf{K}_{xy} \\
  \mathbf{K}_{yx} & \mathbf{K}_{yy}
\end{bmatrix}.
\end{equation}
The blocks separate~\cite{deville2002high, zienkiewicz2013finite} as
\begin{align}
\mathbf{K}_{xx} &= C\!\left(\mathbf{K}_{1D} \kron \mathbf{M}_{1D}
    + \tfrac{1-\nu}{2}\,\mathbf{M}_{1D} \kron \mathbf{K}_{1D}\right),
    \label{eq:Kxx}\\
\mathbf{K}_{yy} &= C\!\left(\tfrac{1-\nu}{2}\,\mathbf{K}_{1D} \kron \mathbf{M}_{1D}
    + \mathbf{M}_{1D} \kron \mathbf{K}_{1D}\right),
    \label{eq:Kyy}\\
\mathbf{K}_{xy} &= \mathbf{K}_{yx}^\top
  = -C\,\tfrac{1+\nu}{2}\,\bigl(\mathbf{G}_{1D} \kron \mathbf{G}_{1D}\bigr),
  \label{eq:Kxy}
\end{align}
where $C = E/(1-\nu^2)$ and
$\mathbf{G}_{1D} = \tfrac{1}{2}\,\mathrm{circ}(-1,\,0,\,1)$ is the antisymmetric
gradient coupling, whose shift decomposition is
\begin{equation}
\label{eq:G1-shift}
\mathbf{G}_{1D} = \tfrac{1}{2}\Scd - \tfrac{1}{2}\Sc.
\end{equation}
The $\mathbf{G}_{1D} \kron \mathbf{G}_{1D}$ coupling is the new
cross-derivative term. Since $\mathbf{G}_{1D}$ is antisymmetric,
$\mathbf{G}_{1D} \kron \mathbf{G}_{1D}$ is symmetric, so
$\mathbf{K}_{xy} = \mathbf{K}_{yx}$ on a periodic homogeneous grid.

Elasticity carries $n_d = 2$ degrees of freedom per node, so $\Kel$ is
$2N^{2} \times 2N^{2}$ and the $2\log_2 N$ spatial qubits no longer address it. One
further qubit is required, the \emph{dof qubit} $|d\rangle$, with $d = 0$
selecting the $x$ displacement and $d = 1$ the $y$ displacement. It is now the
most significant qubit of the system register, following the $x$ and $y$ registers.

Equation~\eqref{eq:K-block} is the sum
of three matrices, a block-diagonal average, a block-diagonal difference, and a
purely off-diagonal part. With $\mathbf{K}_{xy} = \mathbf{K}_{yx}$ each is a
single Kronecker product,
\begin{equation}
\label{eq:K-split}
\begin{split}
\mathbf{K}_{\mathrm{2D,el}} = {}
&\tfrac{1}{2}\begin{bmatrix}
\mathbf{K}_{xx}{+}\mathbf{K}_{yy} & \mathbf{0}\\[2pt]
\mathbf{0} & \mathbf{K}_{xx}{+}\mathbf{K}_{yy}
\end{bmatrix}\\[4pt]
+{}&\tfrac{1}{2}\begin{bmatrix}
\mathbf{K}_{xx}{-}\mathbf{K}_{yy} & \mathbf{0}\\[2pt]
\mathbf{0} & \mathbf{K}_{yy}{-}\mathbf{K}_{xx}
\end{bmatrix}\\[4pt]
+{}&\begin{bmatrix}
\mathbf{0} & \mathbf{K}_{xy}\\[2pt]
\mathbf{K}_{xy} & \mathbf{0}
\end{bmatrix},
\end{split}
\end{equation}
each term a single Kronecker product of an $N^2 \times N^2$ spatial block with
$I$, $Z$ and $X$ respectively, so that
\begin{equation}
\label{eq:K-dof}
\Kel = \tfrac{\mathbf{K}_{xx}{+}\mathbf{K}_{yy}}{2} \kron I
           + \tfrac{\mathbf{K}_{xx}{-}\mathbf{K}_{yy}}{2} \kron Z
           + \mathbf{K}_{xy} \kron X .
\end{equation}
This splits $\Kel$ into three \emph{Pauli components}, each an
$N^2 \times N^2$ spatial block tensored with one single-qubit matrix. We can now substitute the shift decompositions of $\mathbf{K}_{1D}$, $\mathbf{M}_{1D}$ and
$\mathbf{G}_{1D}$ into Equation~\eqref{eq:K-dof} to arrive at the following proposition.
\begin{proposition}
\label{prop:lcu}
For the periodic plane-stress $Q_4$ elasticity operator with Young's modulus
$E$ and Poisson ratio $\nu$, the global stiffness matrix $\Kel$ admits
the exact LCU
\begin{equation}
\label{eq:lcu}
\Kel = \sum_{(p,q,r)\in\mathcal{T}} c_{pqr}\;
             U_p^{(x)} \kron U_q^{(y)} \kron \sigma_r^{(d)},
\end{equation}
with $U_p, U_q \in \{I, \Sc, \Scd\}$, $\sigma_r \in \{I, Z, X\}$, and
$\mathcal{T}$ the set of triples carrying a nonzero coefficient, of size
$|\mathcal{T}| = 17$ for every $\nu \in (-1, \tfrac12)$ and every
$N = 2^m$ with $m \ge 2$. The subnormalization is
\begin{equation}
\label{eq:alpha}
\ael(\nu) \;=\; \frac{E\,(33+\nu)}{6\,(1-\nu^{2})},
\end{equation}
constant in $N$.
\end{proposition}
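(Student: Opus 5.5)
The plan is to expand each of the three Pauli components of Equation~\eqref{eq:K-dof} separately in the nine spatial shift pairs $\{I,\Sc,\Scd\}^{\kron 2}$, count the nonzero coefficients, and sum their absolute values. Since the three components carry the distinct dof factors $I$, $Z$, $X$, no term of one component can merge with a term of another. So $|\mathcal{T}|$ and $\ael$ are additive over the components. Write $\beta = (1-\nu)/2$, so that $\mathbf{K}_{xx}+\mathbf{K}_{yy} = C(1+\beta)\,\Ksc$ and $\mathbf{K}_{xx}-\mathbf{K}_{yy} = C(1-\beta)\bigl(\mathbf{K}_{1D}\kron\mathbf{M}_{1D} - \mathbf{M}_{1D}\kron\mathbf{K}_{1D}\bigr)$.

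\emph{$I$ component.} Equation~\eqref{eq:K2D} gives directly nine terms: $\tfrac{4}{3}C(1+\beta)$ on $I\kron I$ and $-\tfrac{1}{6}C(1+\beta)$ on each of the other eight pairs. Their absolute sum is $\tfrac{8}{3}C(1+\beta) = \tfrac{4}{3}C(3-\nu)$.

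\emph{$Z$ component.} I would substitute Equations~\eqref{eq:K1D-shift} and~\eqref{eq:M1-shift} into $\mathbf{K}_{1D}\kron\mathbf{M}_{1D} - \mathbf{M}_{1D}\kron\mathbf{K}_{1D}$. On $I\kron I$ the two products both give $\tfrac43$, so the difference vanishes. On the four diagonal pairs $U_p\kron U_q$ with $U_p,U_q \in\{\Sc,\Scd\}$ both products give $-\tfrac16$, so these also cancel. On the four edge pairs the difference is $\pm 1$: it is $+1$ on $I\kron\Sc^{(\dagger)}$ and $-1$ on $\Sc^{(\dagger)}\kron I$. The $Z$ component therefore has exactly four terms of magnitude $C(1-\beta)/2 = C(1+\nu)/4$, with absolute sum $C(1+\nu)$.

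\emph{$X$ component.} Equation~\eqref{eq:G1-shift} gives $\mathbf{G}_{1D}\kron\mathbf{G}_{1D}$ as four diagonal-pair terms of magnitude $\tfrac14$. Multiplying by $C(1+\nu)/2$ yields four terms with absolute sum $C(1+\nu)/2$.

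This gives $9+4+4 = 17$ triples. Adding the three absolute sums,
\begin{equation*}
\tfrac{4}{3}C(3-\nu) + C(1+\nu) + \tfrac12 C(1+\nu) = \tfrac{C}{6}\bigl(24 - 8\nu + 9 + 9\nu\bigr) = \frac{E(33+\nu)}{6(1-\nu^2)} .
\end{equation*}
No $N$ appears anywhere in the calculation, so $\ael$ is constant in $N$.

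The step needing the most care is justifying that the count is exactly 17, no more and no fewer, for the whole stated range of $\nu$ and $N$.
\begin{itemize}
\item \emph{No accidental vanishing.} The coefficients are multiples of $1+\beta = (3-\nu)/2$ and $1+\nu$. For $\nu\in(-1,\tfrac12)$ neither factor is zero, and $C$ is finite.
\item \emph{No merging of shifts.} The nine spatial pairs must be distinct operators. This requires $\Sc \neq \Scd$, i.e. $N \ge 3$, which is why $m\ge 2$ is assumed. At $m=1$ we would have $\Sc=\Scd$, and terms would merge and partially cancel.
\item \emph{No further merging at all.} I would also observe that the $27$ operators $U_p\kron U_q\kron\sigma_r$ are linearly independent for $N\ge 3$: their spatial parts have disjoint supports in the shift group, and $\{I,Z,X\}$ is linearly independent. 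Hence the decomposition is unique over this dictionary, and no rearrangement can reduce the 17 terms or their absolute coefficient sum.
\end{itemize}
Finally, $\ael = \sum|c_{pqr}|$ by Equation~\eqref{eq:lcu-block}.
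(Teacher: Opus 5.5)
Your proposal is correct and follows the paper's proof essentially step for step: the same three-component expansion, the same cancellation of the $Z$ centre and diagonal entries, the nonvanishing prefactors $(3-\nu)$ and $(1+\nu)$, the distinctness requirement $\Sc \neq \Sc^\dagger$ at $m \ge 2$, and spatial magnitudes $16/3$, $4$, $1$ weighted by the component prefactors to give $\ael$. Your linear-independence remark, based on the disjoint supports of the nine spatial pairs, goes slightly beyond the paper's bare distinctness claim. It shows the expansion over this dictionary is unique, so $\mathcal{T}$ and $\alpha$ are well defined rather than merely attained by one particular merging.
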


\begin{proof}
Substitute the shift decompositions of $\mathbf{K}_{1D}$, $\mathbf{M}_{1D}$ and
$\mathbf{G}_{1D}$ into the three Pauli components of
Equation~\eqref{eq:K-dof} and merge coefficients landing on the same shift
pair. The $I$ and $Z$ components expand over the nine pairs
$\{I,\Sc,\Scd\}^{\kron2}$, the $X$ component over the four pairs carrying a
shift in both directions, since $\mathbf{G}_{1D}$ has no identity term. Of the
nine $Z$ entries only the four axis pairs survive the antisymmetric
combination $\mathbf{K}_{1D}\kron\mathbf{M}_{1D} -
\mathbf{M}_{1D}\kron\mathbf{K}_{1D}$, giving $9 + 4 + 4 = 17$. The nine pairs
are distinct as operators for $m \ge 2$, where $\Sc \neq \Scd$, so no further
merging occurs, and no coefficient vanishes: the $I$ prefactor carries
$(3{-}\nu)$ and the $Z$ and $X$ prefactors carry $(1{+}\nu)$, vanishing only at
$\nu = 3$ and $\nu = -1$ respectively, neither of which the plane-stress
operator admits. For $\alpha$, the merged coefficients of the spatial factors
sum in magnitude to $16/3$, $4$ and $1$ for the $I$, $Z$ and $X$ components
respectively, and weighting them by the component prefactors $C(3{-}\nu)/4$,
$C(1{+}\nu)/4$ and $C(1{+}\nu)/2$ gives Equation~\eqref{eq:alpha}.
\end{proof}
Table~\ref{tab:lcu} groups the $17$ terms. There are at most four distinct
magnitudes. The eight non-identity shift pairs are the eight stencil
neighbors: four axis pairs such as $(S,I)$ and $(I,S^\dagger)$, and four
diagonal pairs such as $(S,S)$ and $(S^\dagger,S)$.

\begin{table}[H]
\centering
\footnotesize
\setlength{\tabcolsep}{5pt}
\renewcommand{\arraystretch}{1.45}
\caption{The 17 LCU terms of the periodic plane-stress $Q_4$ elasticity
operator at $E = 1$, grouped by coefficient. $S \equiv \Sc$,
$S^\dagger \equiv \Scd$, and the pair $(U^{(x)}_p, U^{(y)}_q)$ names the
stencil offset. Summing the distinct magnitudes with their multiplicities gives
$\ael(\nu)$ of Equation~\eqref{eq:alpha}.}
\label{tab:lcu}
\begin{tabular}{@{}llccr@{}}
\toprule
$\sigma_r$ & stencil offsets $(U^{(x)}_p, U^{(y)}_q)$ & \# & coefficient & $\nu=0.3$ \\
\midrule
$I$ & $(I,I)$, the centre & $1$ & $\dfrac{2(3-\nu)}{3(1-\nu^{2})}$ & $+1.9780$ \\
$I$ & all eight neighbours & $8$ & $-\dfrac{3-\nu}{12(1-\nu^{2})}$ & $-0.2473$ \\
$Z$ & $(I,S)$, $(I,S^\dagger)$ & $2$ & $+\dfrac{1}{4(1-\nu)}$ & $+0.3571$ \\
$Z$ & $(S,I)$, $(S^\dagger,I)$ & $2$ & $-\dfrac{1}{4(1-\nu)}$ & $-0.3571$ \\
$X$ & $(S,S^\dagger)$, $(S^\dagger,S)$ & $2$ & $+\dfrac{1}{8(1-\nu)}$ & $+0.1786$ \\
$X$ & $(S,S)$, $(S^\dagger,S^\dagger)$ & $2$ & $-\dfrac{1}{8(1-\nu)}$ & $-0.1786$ \\
\midrule
\multicolumn{2}{@{}l}{$\ael = \sum_k |c_k|$} & $17$ & $\dfrac{E(33+\nu)}{6(1-\nu^{2})}$ & $6.0989$ \\
\bottomrule
\end{tabular}

\end{table}
The spectral norm is $\norm{\Kel}_2 = 4C$ for every $\nu$ and every $N$, so the
ratio $\ael/\norm{\Kel}_2$ is a closed form,
\begin{equation}
\label{eq:el-tightness}
\frac{\ael}{\norm{\Kel}_2} = \frac{33+\nu}{24} ,
\end{equation}
which is $1.3875$ at $\nu = 0.3$.

The term count $L = 17$ and the closed form $\ael(\nu)$ hold at every
$m \ge 2$, against $\tfrac98 4^m - 1$ terms for an exact Pauli expansion of the
same operator (Equation~\eqref{eq:pauli-counts}).

\section{Two-Phase Microstructure}
\label{sec:microstructure}
We now turn our attention to finite element meshes that carry two materials. We refer to the spatial distribution of the two phases as the \emph{microstructure}.  The periodic two-phase cell is the natural target, because it is the unit cell of computational homogenization~\cite{givois2022qft, liu2024towards, wang2026qafe2,balazi2026quantum}. 

With a second phase, the operator is no longer translation invariant. This leads to two challenges. First, the material is defined element by element, whereas the shift
operators act on nodal registers, so the encoding must relate the two indices;
Section~\ref{sec:mapping} constructs this node-to-element mapping from cyclic
shifts. Second, the element moduli must enter the linear combination as
unitaries. A diagonal modulus matrix is not unitary, and expanding it in a
generic basis would tie $L$ to the microstructure;
Section~\ref{sec:chi_oracle} writes it instead as a combination of the identity
and a material reflection, with coefficients determined by $E_1$ and $E_2$.
Sections~\ref{sec:twophase_poisson} and~\ref{sec:twophase_elasticity} then
assemble the two-phase operators from these pieces.

Figure~\ref{fig:inclusion} shows the primary example
used throughout: a square inclusion of side $N/2$ at the center of the cell.
Each phase carries its own material property. For the Poisson problem, $E_1$
and $E_2$ are the conductivities of the two phases; for elasticity, they are
the two Young's moduli, with a common Poisson ratio $\nu$.
Each element $e$ carries a modulus implied from these two phases,
\begin{equation}
\label{eq:twophase}
E_e = E_2 + (E_1 - E_2)\,\chi_e , \qquad \chi_e \in \{0, 1\},
\end{equation}
 where $\chi_e$ is the microstructure-specific binary field discussed in Section \ref{sec:chi_oracle}.
 
\begin{figure}[H]
\centering
\includegraphics[width=0.25\columnwidth]{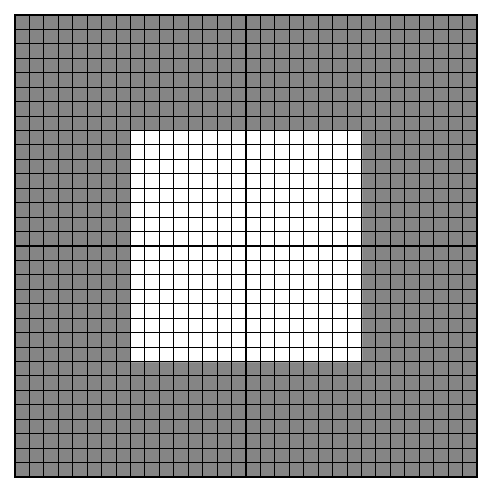}
\caption{The two-phase cell at $m = 5$: a square inclusion of side $N/2$,
centered, at volume fraction $v_f = 1/4$. The inclusion carries $\chi_e = 1$ and
is left unshaded; the matrix carries $\chi_e = 0$. }
\label{fig:inclusion}
\end{figure}

The assumed periodicity of the microstructure links opposite boundaries of the domain. Under this
assumption, an $N \times N$ mesh carries $N^2$ elements and $N^2$ distinct
nodes. We index both nodes and elements using $2m$ qubits, where $N = 2^m$. We
write $|e\rangle$, with $e = (e_x, e_y)$, when the register is read as an
element index, and $|i\rangle$, with $i = (i_x, i_y)$, when read as a nodal
index.

\subsection{Node-to-Element Mapping}
\label{sec:mapping}
 
The material field is defined element by element, whereas the shift operators
of Section~\ref{sec:StructuredEncoding} act on nodal registers. We therefore
define a mapping from one to the other.
 
Each grid node is touched by four adjacent elements, located at the relative
spatial offsets
\begin{equation}
\label{eq:elemoffsets}
(a,b) \in \mathcal{E} := \{(0,0),\,(-1,0),\,(0,-1),\,(-1,-1)\} .
\end{equation}
Figure~\ref{fig:mapping} illustrates this for the node $i = (8,3)$.
 
\begin{figure}[H]
\centering
\includegraphics[width=0.3\columnwidth]{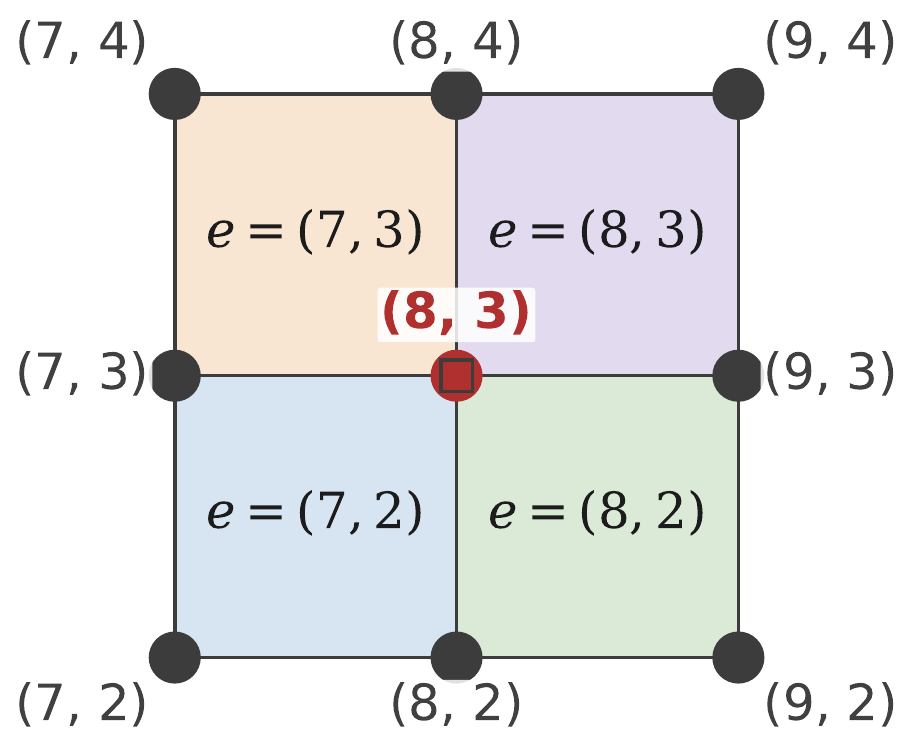}
\caption{The four elements incident on node $i = (8,3)$, on the $2 \times 2$
patch surrounding it. Each element carries the index of its own lower-left
node, marked by a square, so the element at offset $(a,b)$ has index
$i + (a,b)$ and the four indices are $(8,3)$, $(7,3)$, $(8,2)$ and $(7,2)$ as
$(a,b)$ runs over Equation~\eqref{eq:elemoffsets}.}
\label{fig:mapping}
\end{figure}
 
The node-to-element mapping $S_{(a,b)}$ carries the nodal register $|i\rangle$
to the element index $e = \bigl(i + (a,b)\bigr) \bmod N$, componentwise:
\begin{equation}
\label{eq:Sab}
S_{(a,b)} = \Sc^{\,a} \kron \Sc^{\,b} ,
\end{equation}
where $\Sc$ is the 1D cyclic forward shift of Equation~\eqref{eq:Sc-1D} and
$\Sc^{-1} = \Scd$ its backward counterpart. Thus, for example,
\begin{equation}
\label{eq:Sab-example}
S_{(1,0)} = \Sc^{\,1} \kron \Sc^{\,0} = \Sc \kron I .
\end{equation}
Since $\mathcal{E} = \{0,-1\}^2$, the mapping factorizes over the two spatial
registers, and the four offsets of Equation~\eqref{eq:elemoffsets} are
generated by one conditional shift per register.
The offsets are taken modulo $N$, so a node on the last row or column reaches
elements in the first, and $\mathcal{E}$ is the same four offsets at every node
with no exception at the mesh ends. Periodicity is also what makes
$S_{(a,b)}$ a permutation, and hence unitary. $S_{(a,b)}$ acts
purely as a spatial coordinate translation, without referencing the material
property or the microstructure geometry.
 
\subsection{The Material Oracle and Reflection}
\label{sec:chi_oracle}
 
The microstructure itself is defined as a binary field on the elements,
\begin{equation}
\label{eq:chi}
\chi : \{0,\dots,N{-}1\}^2 \to \{0,1\},
\end{equation}
where
\begin{equation}
\label{eq:vf}
v_f = \frac{1}{N^2}\sum_e \chi_e ,
\end{equation}
is the volume fraction of phase~1. We write $r = E_1/E_2$ for the
phase contrast ratio.
 
Two-phase encoding requires a quantum circuit that evaluates the membership
field $\chi$. For the square inclusion in Figure~\ref{fig:inclusion} the
membership test factorizes into independent conditions on $e_x$ and $e_y$. An
element lies inside the inclusion when $e_x \in [N/4,\, 3N/4)$ and
$e_y \in [N/4,\, 3N/4)$. The two leading bits of $e_x$, with $e_{x,m-1}$ the
most significant, place it in one of four quarters, and $e_x$ lies in range
precisely when $e_{x,m-1} \oplus e_{x,m-2} = 1$. Applying the identical
condition to $e_y$ yields the Boolean predicate
\begin{equation}
\label{eq:holePredicate}
\chi_e = \bigl(e_{x,m-1} \oplus e_{x,m-2}\bigr)
\;\wedge\; \bigl(e_{y,m-1} \oplus e_{y,m-2}\bigr) ,
\end{equation}
for every $m \geq 2$. Note that Equation~\eqref{eq:holePredicate} applies only to the microstructure of Figure~\ref{fig:inclusion}; other microstructures require their own predicates.
 
Given the Boolean indicator $\chi_e$, the material phase enters the block
encoding through the \emph{material reflection oracle} $\Rchi$:
\begin{equation}
\label{eq:Rchi}
\Rchi = I - 2\,\mathrm{diag}(\chi),
\end{equation}
which acts on the element state $|e\rangle$ by applying a phase flip
conditional on phase membership,
\begin{equation}
\label{eq:Rchi2}
\Rchi\,|e\rangle = (-1)^{\chi_e}\,|e\rangle .
\end{equation}
The operator $\Rchi$ is diagonal, Hermitian, and unitary ($\Rchi^2 = I$).
Because it marks states with a phase rather than a flag register, it requires
no additional ancilla qubits and enters linear combinations of unitaries
directly. Under Equation~\eqref{eq:holePredicate}, $\Rchi$ acts on exactly
four qubits at every $m$, the two leading bits of each coordinate: two CNOT
gates compute the two parities in place, one controlled $Z$ applies the phase,
and two CNOT gates restore the register. No ancilla is needed, because the
parities are computed in place on the coordinate registers rather than onto
shift-control qubits. For this microstructure the cost is four CNOT and one controlled $Z$ at
every $m \geq 2$.  
 
Combining the mapping $S_{(a,b)}$ with the base oracle $\Rchi$ yields the four
element-wise material reflection operators $R_{(a,b)}$ as seen from node $i$.
Reading right to left, the nodal index is carried to the element index, the
oracle is queried there, and the index is carried back:
\begin{equation}
\label{eq:Rab}
R_{(a,b)} = S_{(a,b)}^{\dagger}\,\Rchi\,S_{(a,b)}
          = I - 2\,\mathrm{diag}\bigl(\chi_{\,i+(a,b)}\bigr)_i .
\end{equation}
Writing $D_{(a,b)} := \mathrm{diag}\bigl(E_{\,i+(a,b)}\bigr)_i$ for the modulus
of the element at offset $(a,b)$ as seen from node $i$, the two-phase field of
Equation~\eqref{eq:twophase} gives
\begin{equation}
\label{eq:twophase_diag}
D_{(a,b)} = \frac{E_1 + E_2}{2}\,I
          \;-\; \frac{E_1 - E_2}{2}\,R_{(a,b)}.
\end{equation}

\section{Two-Phase Poisson}
\label{sec:twophase_poisson}

Section~\ref{sec:2DPoisson} decomposed the homogeneous cell into 1D factors. A
variable modulus destroys that separation, so we work instead at the level of
the element matrix, where the modulus is a scalar multiplier and the shift
structure survives. At unit conductivity the bilinear $Q_4$ element matrix is
\begin{equation}
\label{eq:ke_poisson}
\mathbf{k}^e = \frac{1}{6} \begin{bmatrix}
 4 & -1 & -2 & -1 \\
-1 &  4 & -1 & -2 \\
-2 & -1 &  4 & -1 \\
-1 & -2 & -1 &  4
\end{bmatrix},
\end{equation}
its rows and columns running counterclockwise from the lower-left corner. We
index them by the corners, writing $k^e[c,c']$ with
$c, c' \in \{0,1\}^2$.

For node $(i_x, i_y)$, Figure~\ref{fig:mapping} shows the four incident
elements and their offsets $(a,b) \in \mathcal{E}$. Each of those elements
couples that node to all four of its own nodes, so a second index is needed to
enumerate them. The element at offset $(a,b)$ has lower-left node
$(i_x, i_y) + (a,b)$, so node $(i_x, i_y)$ is its corner $c = -(a,b)$ and its
four nodes are $(i_x, i_y) + (a,b) + c'$ as $c'$ runs over $\{0,1\}^2$. That
node couples to node $(i_x, i_y) + \delta$, with $\delta = (a,b) + c'$, through
the entry $k^e[c, c']$.

An entry in the row at $(i_x, i_y)$ and the column at $(i_x, i_y) + \delta$ is
the operator $\Sc^{-\delta_x} \kron \Sc^{-\delta_y}$, so the element at $(a,b)$
contributes the four-term operator
\begin{equation}
\label{eq:Tab}
\mathbf{T}_{(a,b)} := \!\!\sum_{c' \in \{0,1\}^2}\!\!
  k^e\bigl[c,\,c'\bigr]\;
  \Sc^{-\delta_x} \kron \Sc^{-\delta_y} ,
\end{equation}
with $\delta = (a,b) + c'$ running with the sum. Scaling each element by its
modulus gives the assembled operator
\begin{equation}
\label{eq:Kassembly}
\Ksc^{\chi} = \sum_{(a,b) \in \mathcal{E}} D_{(a,b)}\,\mathbf{T}_{(a,b)} ,
\end{equation}
with $D_{(a,b)}$ defined in Equation~\eqref{eq:twophase_diag}.

The four $\mathbf{T}_{(a,b)}$ merge into the nine-term homogeneous operator of
Equation~\eqref{eq:K2D},
\begin{equation}
\label{eq:Tmerge}
\sum_{(a,b) \in \mathcal{E}} \mathbf{T}_{(a,b)} = \Ksc ,
\end{equation}
and each carries the same four coefficients up to order, hence the same
coefficient sum $\tfrac43$, so the four together sum to $\tfrac{16}{3}$, which
is $\asc$ of Equation~\eqref{eq:alpha2D}.

Substituting Equation~\eqref{eq:twophase_diag} into
Equation~\eqref{eq:Kassembly} splits the operator into a mean part, which
merges by Equation~\eqref{eq:Tmerge}, and a contrast part, which does not:
\begin{equation}
\label{eq:K2phase}
\Ksc^{\chi} = \frac{E_1{+}E_2}{2}\,\Ksc
  \;-\; \frac{E_1{-}E_2}{2} \sum_{(a,b) \in \mathcal{E}}
        R_{(a,b)}\,\mathbf{T}_{(a,b)} .
\end{equation}
The microstructure appears only inside the reflections.
\begin{proposition}
\label{prop:poisson2phase}
Equation~\eqref{eq:K2phase} is an exact LCU of the two-phase periodic $Q_4$
Poisson operator in which every unitary is a cyclic shift pair, optionally
preceded by one material reflection. For every $E_1 \neq E_2$, every $\chi$ and
every $N = 2^m$ with $m \ge 2$ it has
\begin{equation}
\label{eq:poisson-L-alpha}
L = 9 + 16 = 25 ,
\qquad
\asc^{\chi} = \tfrac{16}{3}\max(E_1, E_2) .
\end{equation}
At $E_1 = E_2$ the second family vanishes and $L$ collapses to the $9$ of
Equation~\eqref{eq:K2D}.
\end{proposition}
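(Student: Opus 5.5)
The plan is to verify exactness, unitarity of the summands, the term count, and the subnormalization directly from Equation~\eqref{eq:K2phase}, taking Equations~\eqref{eq:Tab}--\eqref{eq:Tmerge} and the decomposition \eqref{eq:twophase_diag} as established.

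\emph{Exactness.} First I check that Equation~\eqref{eq:Kassembly} reproduces the assembled stiffness. Row $i$ of the global operator is the sum, over the four incident elements $e = i+(a,b)$, of $E_e$ times row $c=-(a,b)$ of $\mathbf{k}^e$, placed at columns $i+(a,b)+c'$. By construction, $D_{(a,b)}\mathbf{T}_{(a,b)}$ is exactly this placement, because left multiplication by a diagonal matrix scales row $i$ by $E_{i+(a,b)}$. Periodicity, through the mod-$N$ offsets of Section~\ref{sec:mapping}, makes this valid at every node. Substituting \eqref{eq:twophase_diag} and using \eqref{eq:Tmerge} for the mean part then gives \eqref{eq:K2phase} with no approximation.

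\emph{Unitarity and count.} Expanding $\Ksc$ by Equation~\eqref{eq:K2D} gives $9$ shift pairs $U_p\kron U_q$. Expanding each $R_{(a,b)}\mathbf{T}_{(a,b)}$ by \eqref{eq:Tab} gives four products $R_{(a,b)}(\Sc^{-\delta_x}\kron\Sc^{-\delta_y})$. Each such product is a diagonal $\pm1$ reflection times a permutation, hence a unitary generalized permutation. Over the four offsets this yields $16$ terms, so $L = 9+16 = 25$. For $m\ge2$ the offsets $\delta\in\{-1,0,1\}^2$ give distinct shift pairs, because $\Sc\neq\Scd$. The sixteen contrast terms cannot be merged with one another in general, since pairs sharing a shift carry different reflections $R_{(a,b)}$, and those reflections depend on $\chi$ through different element offsets. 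I will state the count as that of the LCU as written. A degenerate $\chi$, for instance a constant field with $R_{(a,b)}=\pm I$, could allow further merging but never requires more terms. When $E_1=E_2$ the prefactor $(E_1-E_2)/2$ vanishes and only the $9$ mean terms remain.

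\emph{Subnormalization.} Each $\mathbf{T}_{(a,b)}$ carries the four entries of one row of $\mathbf{k}^e$. Every row of \eqref{eq:ke_poisson} has absolute sum $(4+1+2+1)/6 = 4/3$, so the four $\mathbf{T}$'s contribute $16/3$ in total. The mean part contributes $\tfrac{|E_1+E_2|}{2}\cdot\tfrac{16}{3}$ by \eqref{eq:alpha2D}, and the contrast part contributes $\tfrac{|E_1-E_2|}{2}\cdot\tfrac{16}{3}$. For positive moduli, $\tfrac12(|E_1+E_2|+|E_1-E_2|)=\max(E_1,E_2)$, which gives $\asc^{\chi}=\tfrac{16}{3}\max(E_1,E_2)$. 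This value is independent of $N$ and $\chi$.

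The main obstacle is the distinctness and non-merging claim behind $L=25$ for \emph{every} $\chi$. I would handle it by fixing the convention that $L$ counts the terms of \eqref{eq:K2phase} as written, and by noting separately that no merge is forced for generic microstructures.
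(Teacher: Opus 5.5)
Your proposal is correct and follows the paper's own argument: it counts nine mean terms plus four reflections times four shifts each, as the terms of \eqref{eq:K2phase} as written, and obtains $\asc^{\chi}$ from the row sums $\tfrac43$ of $\mathbf{k}^e$ and the identity $\tfrac12\bigl[(E_1+E_2)+|E_1-E_2|\bigr]=\max(E_1,E_2)$. Your explicit check of the assembly identity, and your remark that a constant $\chi$ would allow further merging at the operator level, are extra care that the paper leaves implicit in its ``distinct index slots'' phrasing.
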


\begin{proof}
The first family of Equation~\eqref{eq:K2phase} is $\Ksc$, nine terms. The
second is four reflections carrying four shifts each, and the sixteen
combinations occupy sixteen index slots distinct from each other and from the
first family, so nothing merges: $L = 25$. For $\asc^{\chi}$, the first family
contributes $\tfrac12(E_1{+}E_2)\,\asc$ and the second
$\tfrac12|E_1{-}E_2| \sum_{(a,b)} \|\mathbf{T}_{(a,b)}\|$, where
$\|\mathbf{T}_{(a,b)}\|$ denotes the sum of the coefficient magnitudes of
$\mathbf{T}_{(a,b)}$. Each $\|\mathbf{T}_{(a,b)}\|$ is $\tfrac43$, so the
second sum is $\tfrac{16}{3}$, as is $\asc$, giving
$\asc^{\chi} = \tfrac{16}{3}\bigl[(E_1{+}E_2) + |E_1{-}E_2|\bigr]/2$, which is
$\tfrac{16}{3}\max(E_1,E_2)$. The factor $E_1 - E_2$ gives the collapse.
\end{proof}
Material contrast introduces no extra subnormalization penalty beyond
scaling with the stiffer phase's modulus. Switching the soft and stiff phases
gives the same subnormalization.

This subnormalization reaches the smallest value a shift-based LCU can attain,
$\asc^{\chi} = \norm{\Ksc^{\chi}}_\infty$, whenever the mesh contains at least
one node completely surrounded by the stiffer material. Because the matrix
infinity norm $\norm{\Ksc^{\chi}}_\infty$ is the maximum absolute row sum
across all nodes, a single fully stiffened node raises the matrix norm
to its upper limit of $\frac{16}{3}\max(E_1,E_2)$, which equals
$\asc^{\chi}$.

If no such node exists, as in a fine checkerboard pattern where every
node touches a mix of soft and stiff elements, the maximum row sum
$\norm{\Ksc^{\chi}}_\infty$ is smaller than $\frac{16}{3}\max(E_1,E_2)$, making
the bound less tight. For example, an $8 \times 8$ checkerboard with a $3:1$
material contrast yields $\norm{\Ksc^{\chi}}_\infty = 10.667$ against
$\asc^{\chi} = 16$. The encoding remains exact in all cases, but $\asc^{\chi}$
reflects a worst-case material bound.

Neither $L = 25$ nor the circuit varies with the contrast; only the
\textsc{prep} angles and $\asc^{\chi}$ do. An exact Pauli expansion of the
same operator leads to $(2^{m+1}-1)^2$ terms at every contrast for the
microstructure of Figure~\ref{fig:inclusion}
(Equation~\eqref{eq:pauli-counts}).

\section{Two-Phase Elasticity}
\label{sec:twophase_elasticity}

For two-phase elasticity, the construction of Section~\ref{sec:twophase_poisson} applies unchanged once
$k^e[c,c']$ becomes a $2 \times 2$ block. Two things differ, and both belong to
elasticity rather than to the method: a fourth dof component, $iY$, appears, and the
four element contributions at a node no longer share a sign.

The plane-stress $Q_4$ element matrix is $8 \times 8$, two degrees of freedom
per node. Indexing rows and columns by corners as before, $k^e[c,c']$ is now
the $2 \times 2$ block coupling the two displacement components at the
corner-$c$ node to those at the corner-$c'$ node. Each block resolves over the
real orthogonal basis $\{I, Z, X, iY\}$ of the dof register, the first
three as in Section~\ref{sec:elasticity} and the fourth
$iY = \bigl(\begin{smallmatrix}0 & 1\\ -1 & 0\end{smallmatrix}\bigr)$, which is
real and orthogonal and carries the antisymmetric part of each block. Writing
$[\,\cdot\,]_r$ for the component along $\sigma_r$, the element at offset
$(a,b)$ contributes
\begin{equation}
\label{eq:Tab-el}
\mathbf{T}_{(a,b)} := \!\!\sum_{c' \in \{0,1\}^2}\sum_{r}\!\!
  \bigl[k^e[c,\,c']\bigr]_r\;
  \Sc^{-\delta_x} \kron \Sc^{-\delta_y} \kron \sigma_r ,
\end{equation}
with $c = -(a,b)$ and $\delta = (a,b) + c'$ as in
Equation~\eqref{eq:Tab}. The assembly and the
split are then Equations~\eqref{eq:Kassembly} and~\eqref{eq:K2phase} verbatim,
with $D_{(a,b)}$ acting on the spatial registers alone:
\begin{equation}
\label{eq:K2phase-el}
\Kel^{\chi} = \frac{E_1{+}E_2}{2}\,\Kel
  \;-\; \frac{E_1{-}E_2}{2} \sum_{(a,b) \in \mathcal{E}}
        R_{(a,b)}\,\mathbf{T}_{(a,b)} .
\end{equation}

The two properties that carried Section~\ref{sec:twophase_poisson} now read
differently. The merge still holds, $\sum_{(a,b)} \mathbf{T}_{(a,b)} = \Kel$,
the $17$ terms of Proposition~\ref{prop:lcu}. But each $\mathbf{T}_{(a,b)}$ now
carries ten terms rather than four, distributed $2, 3, 3, 2$ over its four
offsets: the offset at the node itself and the diagonal offset carry
$\{I, X\}$, while the two axis offsets carry $\{I, Z, iY\}$. The reflection
family therefore holds $40$ terms.

The $iY$ component is absent from the merged family and present in the
reflection family, and the reason is structural. Merging restores the symmetry
$\mathbf{K}_{xy} = \mathbf{K}_{yx}$ on which the three-component rewriting of
Equation~\eqref{eq:K-dof} depends; with a variable modulus each element is
weighted separately, the shear block loses that symmetry, and $iY$ carries the
antisymmetric remainder. Its total weight is
\begin{equation}
\label{eq:iY-weight}
\sum_{(a,b)} \sum_{\delta} \bigl| [\mathbf{T}_{(a,b)}]_{iY} \bigr|
  = \frac{|1 - 3\nu|}{1 - \nu^{2}} ,
\end{equation}
so the eight $iY$ terms vanish at $\nu = 1/3$ and nowhere else.
\begin{proposition}
\label{prop:elasticity2phase}
Equation~\eqref{eq:K2phase-el} is an exact LCU of the two-phase periodic
plane-stress $Q_4$ operator in which every unitary is a cyclic shift pair
tensored with a single-qubit $\sigma_r \in \{I,Z,X,iY\}$, optionally preceded
by one material reflection. For every $E_1 \neq E_2$, every $\chi$, every
$\nu \in (-1,\tfrac12)$ and every $N = 2^m$ with $m \ge 2$,
\begin{equation}
\label{eq:el-L}
L = 17 + 40 = 57 , \qquad L = 49 \ \text{at } \nu = 1/3 ,
\end{equation}
and the subnormalization is
\begin{equation}
\label{eq:el-alpha}
\alpha^{\chi}_{\mathrm{2D,el}}
  = \min(E_1,E_2)\,A(\nu) \;+\; |E_1 - E_2|\,B(\nu) ,
\end{equation}
with $A(\nu) = \ael$ of Equation~\eqref{eq:alpha} at $E = 1$ and
\begin{equation}
\label{eq:B-nu}
B(\nu) = \frac{69 + 5\nu + 6\,|1 - 3\nu|}{12\,(1 - \nu^{2})} .
\end{equation}
At $E_1 = E_2$ the second family vanishes and $L$ collapses to $17$.
\end{proposition}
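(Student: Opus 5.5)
Following the two-phase Poisson argument of Proposition~\ref{prop:poisson2phase}, I would split Equation~\eqref{eq:K2phase-el} into its two families and count each separately.

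\textbf{Exactness.} Substitute Equation~\eqref{eq:twophase_diag} into the assembly $\sum_{(a,b)} D_{(a,b)}\mathbf{T}_{(a,b)}$. By the merge identity $\sum_{(a,b)}\mathbf{T}_{(a,b)}=\Kel$, this gives Equation~\eqref{eq:K2phase-el} exactly. Each unitary is a product of cyclic shifts with $\sigma_r$, possibly preceded by $R_{(a,b)}$. The reflection acts only on the spatial registers, so it commutes with $\sigma_r$.

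\textbf{Term count.} The mean family is $\Kel$ scaled by $\tfrac12(E_1+E_2)$, which carries the $17$ terms of Proposition~\ref{prop:lcu}. For the reflection family, I would compute the $8\times 8$ plane-stress $Q_4$ element matrix explicitly. I would then resolve each $2\times2$ block $k^e[c,c']$ over $\{I,Z,X,iY\}$ using the trace inner products:
\[
[k]_I=\tfrac12(k_{11}+k_{22}),\quad [k]_Z=\tfrac12(k_{11}-k_{22}),\quad [k]_X=\tfrac12(k_{12}+k_{21}),\quad [k]_{iY}=\tfrac12(k_{12}-k_{21}).
\]
This should confirm the stated $2,3,3,2$ pattern of nonzero components.
\begin{itemize}
\item The self and diagonal offsets have blocks with $k_{11}=k_{22}$ and a symmetric shear entry.
\item The axis offsets have $k_{11}\neq k_{22}$ and an antisymmetric shear part proportional to $(1-3\nu)$.
\end{itemize}
Distinctness follows from the index structure. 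Terms with different $(a,b)$ carry different reflections $R_{(a,b)}$, and within a single $(a,b)$ the four $c'$ give distinct shift pairs for $m\ge2$. So these terms occupy $40$ separate slots, none merging with each other or with the unreflected family, giving $L=57$. At $\nu=1/3$ the eight $iY$ coefficients vanish, which leaves $49$. The other coefficients carry factors of $(3-\nu)$ or $(1+\nu)$ and never vanish on $(-1,\tfrac12)$.

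\textbf{Subnormalization.} The total is
\[
\alpha^{\chi}_{\mathrm{2D,el}} = \tfrac12(E_1+E_2)\,A(\nu) + \tfrac12|E_1-E_2|\,W(\nu),
\]
where $W(\nu)=\sum_{(a,b)}\|\mathbf{T}_{(a,b)}\|$ is the total coefficient magnitude of the $40$ reflection terms at $E=1$. Rewriting $\tfrac12(E_1+E_2)=\min(E_1,E_2)+\tfrac12|E_1-E_2|$ gives Equation~\eqref{eq:el-alpha} with
\[
B=\tfrac12(A+W).
\]
It therefore suffices to show
\[
W(\nu)=\frac{69+5\nu+6|1-3\nu|}{6(1-\nu^2)}-\frac{33+\nu}{6(1-\nu^2)}=\frac{36+4\nu+6|1-3\nu|}{6(1-\nu^2)}.
\]
By Equation~\eqref{eq:iY-weight}, the $iY$ share of $W$ is $|1-3\nu|/(1-\nu^2)$. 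The remaining $I,Z,X$ magnitudes should then total $(18+2\nu)/(3(1-\nu^2))$.

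\textbf{Main obstacle.} The hard step is this last bookkeeping of $W$. I would tabulate the $10$ coefficients of one $\mathbf{T}_{(a,b)}$ as explicit functions of $\nu$. I would then use the reflection and rotation symmetry of the square element to argue that all four offsets share the same multiset of magnitudes, as in the scalar case, and multiply by four. Care is needed with absolute values. The $I$ and $Z$ entries of the axis blocks mix $(3-\nu)$-type and $(1-\nu)$-type terms, and I would check their signs on $(-1,\tfrac12)$ so that each contributes a fixed linear expression; only the $iY$ term should produce an absolute value. As a sanity check, I would verify $W$ numerically at $\nu=0.3$ against an assembled operator at $m=2$.
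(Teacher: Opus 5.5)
Your proposal is correct and follows the paper's own proof. The $17+40$ slot count, the loss of the eight $iY$ terms at $\nu=1/3$, and the rewrite $\tfrac12(E_1+E_2)=\min(E_1,E_2)+\tfrac12|E_1-E_2|$ giving $B=(A+W)/2$ are exactly its steps, and your Pauli projections and corner-reflection symmetry supply the element evaluation the paper leaves implicit.

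The bookkeeping you flagged as the obstacle does close. On each $\mathbf{T}_{(a,b)}$ every $I$, $Z$ and $X$ coefficient is a single multiple of $(3-\nu)$ or $(1+\nu)$, so no other absolute values arise. The ten magnitudes sum to $\bigl[\tfrac{3-\nu}{3}+\tfrac{1+\nu}{2}+\tfrac{|1-3\nu|}{4}\bigr]/(1-\nu^{2})$, and four times this is exactly your target $W(\nu)$.
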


\begin{proof}
The counting is that of Proposition~\ref{prop:poisson2phase} with ten terms per
element in place of four, and the eight $iY$ terms carry total weight
$|1-3\nu|/(1-\nu^{2})$ by Equation~\eqref{eq:iY-weight}, vanishing at
$\nu = 1/3$ and nowhere else, which is the drop to $49$. For the
subnormalization, summing magnitudes in Equation~\eqref{eq:K2phase-el} gives
$\tfrac12(E_1{+}E_2)A + \tfrac12|E_1{-}E_2| \sum_{(a,b)}
\|\mathbf{T}_{(a,b)}\|$, and writing
$\tfrac12(E_1{+}E_2) = \min(E_1,E_2) + \tfrac12|E_1{-}E_2|$ collects the two
into Equation~\eqref{eq:el-alpha} with
$B = \bigl(A + \sum_{(a,b)}\|\mathbf{T}_{(a,b)}\|\bigr)/2$. Evaluating that sum
on the plane-stress element gives Equation~\eqref{eq:B-nu}, whose
absolute-value term is half the $iY$ weight of
Equation~\eqref{eq:iY-weight}.
\end{proof}
Written out, Equation~\eqref{eq:el-alpha} reads
\begin{equation}
\label{eq:el-alpha-explicit}
\begin{aligned}
\alpha^{\chi}_{\mathrm{2D,el}} = \frac{1}{6\,(1 - \nu^{2})} \Bigl[\;
  &\tfrac12 (E_1{+}E_2)(33{+}\nu) \\[-2pt]
  &+\, |E_1{-}E_2|\bigl(18 + 2\nu + 3|1{-}3\nu|\bigr) \Bigr] .
\end{aligned}
\end{equation}

As in the scalar cell, the reflections are present at every contrast, so
$L = 57$ and the circuit are fixed and only the moduli move, against an exact
Pauli expansion of $159{,}338{,}497$ terms at $m = 12$ for the microstructure
of Figure~\ref{fig:inclusion} (Section~\ref{sec:generic}).

\section{The Block-Encoding Circuit}
\label{sec:encoding}

\subsection{Registers}
\label{sec:registers}

The system register carries $2m + \lceil\log_2 n_d\rceil$ qubits ($n_d = 1$ for Poisson and $n_d = 2$ for elasticity), ordered as $x$, $y$, then the dof qubit. The
ancilla carries one shift-control qubit $s$, the
$m-1$ clean carry ancillas $a$ of the incrementer and finally, the \textsc{prep} register of $n_{\mathrm{prep}}$ qubits. Table~\ref{tab:cost} collects the resulting cost in closed form.

\begin{table}[h]
\centering
\footnotesize
\setlength{\tabcolsep}{3.5pt}
\renewcommand{\arraystretch}{1.2}
\caption{Cost of the five encodings as a function of $m$, exact at every
$m \ge 2$ and verified against measurement at every $m$ from $2$ to $12$. A
superscript $\chi$ marks a two-phase cell. $L$ is the number of LCU terms,
the four middle columns are qubit counts, and the last is a gate count. Every
cell also carries one shift control qubit, included in the total; it and the
$m-1$ carry ancillas are clean and shared by every shift in the circuit, so
both are counted once. $n_{\mathrm{prep}}$ can exceed $\lceil\log_2 L\rceil$
because the index is factored; see Section~\ref{sec:prep}. The Toffoli count is taken before transpilation; for the transpiled two-qubit count see Figure~\ref{fig:scaling}.}
\label{tab:cost}
\begin{tabular}{@{}lrrrrrr@{}}
\toprule
 & & \multicolumn{4}{c}{qubits} & Toffoli \\
\cmidrule(lr){3-6}
cell & $L$ & system & carry & \textsc{prep} & total & gates \\
\midrule
1D Poisson           & $3$  & $m$    & $m-1$ & $2$ & $2m+2$  & $4m$    \\
2D Poisson           & $9$  & $2m$   & $m-1$ & $4$ & $3m+4$  & $8m$    \\
2D Poisson$^{\chi}$  & $25$ & $2m$   & $m-1$ & $7$ & $3m+7$  & $16m-7$ \\
2D elasticity        & $17$ & $2m+1$ & $m-1$ & $6$ & $3m+7$  & $8m+4$  \\
2D elasticity$^{\chi}$ & $57$ & $2m+1$ & $m-1$ & $9$ & $3m+10$ & $16m-1$ \\
\bottomrule
\end{tabular}
\end{table}

Only two entries depend on the mesh: the system register, at
$2m + \lceil\log_2 n_d\rceil$ qubits, and the $m-1$ carry ancillas, which are
returned to $|0\rangle$ and reused by every shift. Everything the
decomposition controls is constant in $m$: the term count, the \textsc{prep}
width, and the coefficients themselves. The Toffoli count is affine in $m$ at
a slope of $8$ per shifted register, which is the four controlled shifts of a
homogeneous cell and the eight of a two-phase cell at $2m-2$ Toffoli each.

\subsection{The Cyclic Shift}
\label{sec:shiftCircuit}

The shift is the controlled ripple-carry cyclic increment of
Section~\ref{sec:1DPoisson}, drawn in Figure~\ref{fig:increment}: $2m-2$
Toffoli, $m$ CNOT and $m-1$ clean ancillas, transpiling to $11m-10$ two-qubit
gates.

\begin{figure}[H]
\centering
\includegraphics[width=0.5\columnwidth]{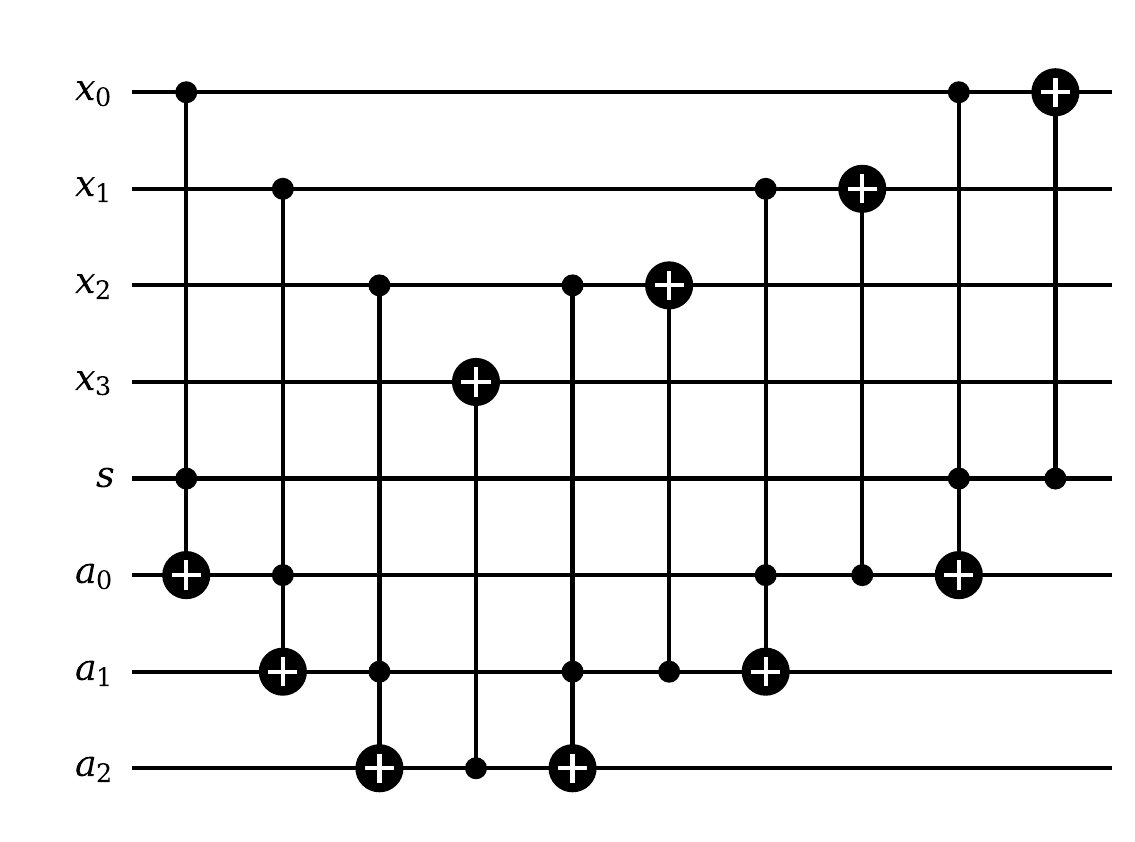}
\caption{The controlled cyclic increment at $m = 4$, the only primitive whose
cost grows with the mesh.}
\label{fig:increment}
\end{figure}

\subsection{PREP}
\label{sec:prep}

Two different preparations are used,
\begin{equation}
\label{eq:preps}
\begin{aligned}
\mathsf{P}_L|0\rangle &= \sum_k \sqrt{|c_k|/\alpha}\;|k\rangle, \\
\mathsf{P}_R|0\rangle &= \sum_k \mathrm{sgn}(c_k)\sqrt{|c_k|/\alpha}\;
   |k\rangle,
\end{aligned}
\end{equation}
so that
\begin{equation}
\label{eq:Ufull}
\Ublk = \bigl(\mathsf{P}_R^\dagger \kron I\bigr)\cdot \mathrm{SEL}
   \cdot \bigl(\mathsf{P}_L \kron I\bigr)
\end{equation}
has top-left block $\sum_k c_k U_k/\alpha = \mathbf{K}/\alpha$ and no signs
appear inside \textsc{select}.

A standard dense \textsc{prep} uses $\lceil \log_2 L\rceil$ qubits. \textsc{select} must then apply each $U_k$ controlled on all $n_{\mathrm{prep}}$ qubits, which requires $L$ controlled
blocks and $\mathcal{O}(L)$ Toffoli gates \cite{babbush2018encoding}. Every
unitary in Equations~\eqref{eq:lcu}, \eqref{eq:K2phase}
and~\eqref{eq:K2phase-el} is, however, a tensor product with one factor per
register: a shift on $x$, a shift on $y$, a Pauli $\sigma_r$ on the dof
register, and at most one reflection $R_{(a,b)}$. Each factor is drawn from a
small fixed set, three shifts per spatial register and four Pauli operators,
so the terms are the combinations of those sets while the operations the
circuit has to realize are their union. We therefore use a \emph{factored} encoding of the
index~\cite{camps2024explicit}, in which \textsc{select} applies one
multiplexer per register and the choice made on one register is independent of
the choice made on the others; for a Pauli LCU, factored encoding over qubits
is standard. Consequently, \textsc{select} cost is independent of $L$; the
Poisson operators at $L=5$ (FD) and $L=9$ (FE) compile to identical
\textsc{select} circuits.
The small price to pay is a wider \textsc{prep} register. The index $k$ is the concatenation of one field per factor: two bits for each shift exponent, two bits for $\sigma_r$, and, for a two-phase operator, three one-bit fields for
the reflection: a flag $g$ that applies $\Rchi$, and bits $a$ and $b$ that
select the offset in Equation~\eqref{eq:Rab}. Unused codes receive zero
amplitude. The register grows from $\lceil \log_2 L\rceil = 4$, $5$, $5$, $6$
to $n_{\mathrm{prep}} = 4$, $7$, $6$, $9$ for the four cells
(Table~\ref{tab:cost}).

\subsection{SELECT}
\label{sec:select}

\textsc{select} applies one controlled operation for each value of each
\textsc{prep} field. The stages, in circuit order, are (see Appendix \ref{sec:circuits}):
\begin{enumerate}\itemsep1pt
  \item \textbf{Spatial:} on each of $x$ and $y$, a controlled increment
    $\Sc$ and a controlled decrement $\Scd$, four controlled shifts in all.
  \item \textbf{Dof:} a controlled $Z$ and $X$, and in the two-phase
    elasticity cell a controlled $iY$, implemented as $R_y(-\pi)$. This stage
    is absent in the scalar cells.
  \item \textbf{Reflection:} in the two-phase cells, decrements on $x$ and $y$
    controlled on $a$ and $b$, $\Rchi$ controlled on $g$, and the matching
    increments. This is $R_{(a,b)}$ of Equation~\eqref{eq:Rab}, and adds four
    controlled shifts and one oracle query.
\end{enumerate}

\textsc{select} therefore contains four controlled shifts in a
homogeneous cell and eight in a two-phase cell, at most three controlled
single-qubit gates, and at most one oracle query, for every $L$. Each
controlled shift costs $2m-2$ Toffoli and each two-bit stage adds two, which
gives the Toffoli counts of Table~\ref{tab:cost}. Since \textsc{select} does not depend on 
$L$, each elasticity circuit exceeds its scalar counterpart by a constant number of
gates.

Finally, every field is at most two bits wide, so each control line is computed by one
Toffoli, and no gate in the four circuits has more than two controls. This
matters for the following reason.

In Qiskit 2.4.2~\cite{javadi2024quantum}, the transpiler synthesizes a
multi-controlled $X$ with four or more controls using idle circuit qubits as
clean ancillas, since a Qiskit circuit assumes every qubit starts in
$|0\rangle$. A block encoding must act correctly on an arbitrary system state,
so the transpiled circuit applies a different operator. The circuits
here avoid this: the factored index of Section~\ref{sec:prep} limits every gate
to two controls, and we synthesize every multi-controlled gate explicitly.
Appendix~\ref{sec:circuits} draws all four encodings as built.

\subsection{Circuit Cost}
\label{sec:where}

Section~\ref{sec:be} noted that the arbitrary-angle rotations of \textsc{prep}
dominate the fault-tolerant cost, each costing
$\mathcal{O}(\log 1/\varepsilon_{\mathrm{syn}})$ $T$ gates while everything else is Clifford and
Toffoli. Table~\ref{tab:where} splits the two-qubit count accordingly: \textsc{prep} is constant in
$m$ and \textsc{select} grows. 

\begin{table}[H]
\centering
\footnotesize
\setlength{\tabcolsep}{4pt}
\renewcommand{\arraystretch}{1.15}
\caption{Cost of the two preparations at $m = 12$, against the whole circuit.
Slots is $2^{n_{\mathrm{prep}}}$, the addressable indices; $L$ of them carry
amplitude and the rest are prepared at zero.}
\label{tab:where}
\begin{tabular}{@{}lrrrrrr@{}}
\toprule
cell & $n_{\mathrm{prep}}$ & slots & $L$ & rot. & \textsc{prep} CX & share \\
\midrule
2D scalar, homog.\ & $4$ & $16$ & $9$ & $29$ & $22$ & $4\%$ \\
2D scalar, 2-phase & $7$ & $128$ & $25$ & $253$ & $240$ & $18\%$ \\
2D elast., homog.\ & $6$ & $64$ & $17$ & $125$ & $114$ & $16\%$ \\
2D elast., 2-phase & $9$ & $512$ & $57$ & $1021$ & $1004$ & $46\%$ \\
\bottomrule
\end{tabular}

\end{table}

Figure~\ref{fig:scaling} gives the transpiled two-qubit count from $m = 2$ to $12$;  the slopes are $48$ gates per qubit of resolution in the homogeneous cells and $96$ in the two-phase cells. The homogeneous
elasticity circuit costs a constant more than the homogeneous scalar one. The intercepts are $25$ and $143$ in the homogeneous cells and $209$ and $1012$ in the two-phase cells. 

\begin{figure}[H]
\centering
\includegraphics[width=0.5\columnwidth]{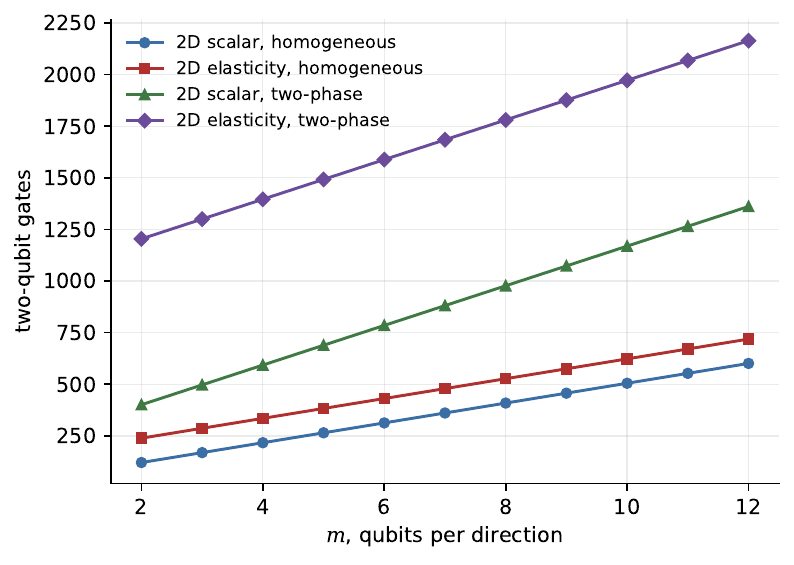}
\caption{Two-qubit gate count against $m$ for the four 2D cells, measured
rather than fitted, for the circuit transpiled to $\{\mathrm{CX}, U\}$ at
optimization level $2$ (Qiskit~2.4.2).}
\label{fig:scaling}
\end{figure}

\subsection{Tightness}
\label{sec:tightness}

The ratio $\alpha/\norm{\mathbf{K}}_2$ factors as
\begin{equation}
\label{eq:tightness-split}
\frac{\alpha}{\norm{\mathbf{K}}_2}
  = \underbrace{\frac{\alpha}{\norm{\mathbf{K}}_\infty}}_{\text{LCU slack}}
    \times
    \underbrace{\frac{\norm{\mathbf{K}}_\infty}{\norm{\mathbf{K}}_2}}
    _{\text{norm gap}} .
\end{equation}
 Table~\ref{tab:tightness} evaluates both factors. Both scalar cells have an
LCU slack of $1$ here: the four element contributions at a node share a sign,
so $\alpha = \norm{\mathbf{K}}_\infty$ whenever some node lies entirely in the
stiffer phase, as shown in Section~\ref{sec:twophase_poisson}; for a
microstructure with no such node, the checkerboard of that section for
instance, the slack exceeds $1$. The elasticity cells have a slack above $1$ at
every microstructure because contributions of opposite sign cancel
(Section~\ref{sec:twophase_elasticity}). For homogeneous elasticity,
$\alpha/\norm{\mathbf{K}}_2 = 1.1935 \times 1.1625 = 1.3875$ at every $m$; in the
two-phase cells the norm gap, and with it the product, decreases as $m$
increases.

\begin{table}[H]
\centering
\footnotesize
\setlength{\tabcolsep}{3.5pt}
\renewcommand{\arraystretch}{1.15}
\caption{The two factors of Equation~\eqref{eq:tightness-split} at $m = 5$,
$\nu = 0.3$, $E_1/E_2 = 3$. The LCU slack is independent of $m$ in all four
cells; the norm gap is independent of $m$ only in the homogeneous cells.
$p_{\mathrm{succ}}$ is the post-selection probability on the maximizing
singular vector.}
\label{tab:tightness}
\begin{tabular}{@{}lrrrr@{}}
\toprule
cell & LCU slack & norm gap & $\alpha/\norm{\mathbf{K}}_2$ & $p_{\mathrm{succ}}$ \\
\midrule
2D scalar, homog.\ & $1.0000$ & $1.3333$ & $1.3333$ & $0.563$ \\
2D scalar, 2-phase & $1.0000$ & $1.3485$ & $1.3485$ & $0.550$ \\
2D elast., homog.\ & $1.1935$ & $1.1625$ & $1.3875$ & $0.519$ \\
2D elast., 2-phase & $1.2473$ & $1.1778$ & $1.4691$ & $0.463$ \\
\bottomrule
\end{tabular}

\end{table}

\section{Implementation}
\label{sec:implementation}

This work is implemented in \pkgname{}, an open-source Python library built
on Qiskit~2.4.2~\cite{javadi2024quantum} and available at \url{\repourl}; the
importable module is \texttt{\libname}. The supported operators are 
\begin{itemize}
    \item \texttt{POISSON1D()}
    \item \texttt{POISSON2D('fd')}
    \item \texttt{POISSON2D('fe')}
     \item \texttt{ELASTICITY2D(nu, E)}
     \item \texttt{POISSON2D\_2PHASE(vf, E1, E2)}
      \item    \texttt{ELASTICITY2D\_2PHASE(nu, vf, E1, E2)}
\end{itemize}

The size is given either as $N$, the grid points per direction, or as $m = \log_2 N$
qubits per direction; the encoded operator carries $N$, $N^2$ or $2N^2$ rows
according to the cell. Listing~\ref{lst:homogeneouselasticity} encodes the homogeneous elasticity cell at $m = 12$. 

\begin{lstlisting}[basicstyle=\ttfamily\scriptsize,
  caption={The homogeneous elasticity cell at $m = 12$. },
  label={lst:homogeneouselasticity}]
from pyblockencode import blockencode, ELASTICITY2D

cell = ELASTICITY2D(nu=0.3)
qc, info = blockencode(cell, N=4096)

print(f"operator = {info.kind}")
print(f"dofs = {info.dofs:,}   L = {info.L}")
print(f"alpha = {info.alpha:.4f}")
print(f"qubits = {info.qubits}")
print(f"Toffoli = {info.toffoli}   CX = {info.cx}")

----- output --------------------------------
operator = elasticity2d
dofs = 33,554,432   L = 17
alpha = 6.0989
qubits = 43
Toffoli = 100   CX = 719
\end{lstlisting}

The two-phase operators place a centered square inclusion of volume fraction
\texttt{vf}; every result reported here uses $v_f = 1/4$, for which the
oracle is Equation~\eqref{eq:holePredicate}. Listing~\ref{lst:two-phase-elasticity} encodes the two-phase elasticity cell at $m = 12$.

\begin{lstlisting}[basicstyle=\ttfamily\scriptsize,
  caption={The two-phase elasticity cell at $m = 12$.},
  label={lst:two-phase-elasticity}]
from pyblockencode import blockencode, ELASTICITY2D_2PHASE

cell = ELASTICITY2D_2PHASE(nu=0.3, vf=0.25, E1=3, E2=1)
qc, info = blockencode(cell, N=4096)

print(f"operator = {info.kind}")
print(f"dofs = {info.dofs:,}   L = {info.L}")
print(f"alpha = {info.alpha:.4f}")
print(f"qubits = {info.qubits}")
print(f"Toffoli = {info.toffoli}   CX = {info.cx}")

----- output --------------------------------
operator = elasticity2d_2phase
dofs = 33,554,432   L = 57
alpha = 19.1209
qubits = 46
Toffoli = 191   CX = 2164
\end{lstlisting}
\section{Conclusions}
\label{sec:conclusions}

In this paper, we presented exact block encodings for 2D homogeneous
elasticity, two-phase 2D Poisson, and two-phase 2D elasticity operators on
periodic finite element grids, built on shift decomposition and a material
reflection oracle. For homogeneous 2D elasticity at $m = 12$ qubits
per direction ($33.6$ million degrees of freedom), the LCU has $17$ terms,
against $18.9$ million for the Pauli expansion.

Two-phase microstructures enter through one material reflection oracle
conjugated by the four node-to-element shifts. The resulting LCUs have
$L = 25$ terms for scalar Poisson and $L = 57$ for elasticity ($49$ at
$\nu = 1/3$), independent of
the grid resolution $N$, the volume fraction $v_f$, and the contrast
$r = E_1/E_2$; the subnormalizations are closed forms in the two moduli,
independent of $N$ and $v_f$.  The formulation is restricted to regular grids, periodic boundary conditions,
and isotropic phases.  

The periodic cell is the setting of computational homogenization that this paper targets. Non-periodic boundaries and unstructured meshes require additional de-periodization and mapping \cite{kharazi2025explicit}, which are not treated here. 

Several extensions follow the same construction: (1) Arbitrary microstructures
require only a different $R_{\chi}$, since neither $L$ nor
$\alpha^{\chi}$ depends on $\chi$, so the open cost is the oracle itself: four
CNOT and one controlled $Z$ for a predicate with a closed form, and a table
lookup for one without. (2) A functionally graded phase replaces the binary
indicator with a discretized modulus, which the same split encodes as one
reflection per bit plane, so $L$ grows with the bit depth rather than with the
number of distinct moduli. (3) The decomposition applies unchanged to any
translation-invariant finite element operator on a regular periodic grid,
including the mass, convection-diffusion and Helmholtz operators, for which
only the element matrix and hence the coefficients change. (4) Extending the work from 2D to 3D 
leave the node-to-element map and the reflection unchanged, with eight incident
elements in place of four, and the scalar cell follows with new constants and
no new argument, while the vector cell requires padding the $3 \times 3$ nodal
blocks into a two-qubit dof register, which is where the additional terms and
subnormalization arise \cite{pechan2026block}. (5) A more efficient \textsc{prep} \cite{suresh2026pyencode} may be desirable, especially in 3D.

Two directions lie outside the present formulation. Nonlinear elasticity
re-encodes,  at every Newton step, a tangent operator whose element moduli vary
with the current state; whether the oracle can be updated rather than rebuilt at
each iteration is open. Finally, the end-to-end cost of a
QSVT-based solve \cite{gilyen2019quantum} remains to be quantified, and the subnormalizations reported here need to be quantified.

\section*{Acknowledgments}
The author acknowledges the Vilas Associate Grant from the University of Wisconsin Graduate School.

\section*{Declarations}
The author declares no conflict of interest.

\section*{Code availability }
The code developed in this work is available at
\url{\repourl}.

\section*{Use of AI Tools}
The author used generative AI assistants, specifically, Claude (Anthropic) and Gemini (Google) during the preparation of this manuscript. The tools were used for code development and drafting of text. All content was reviewed, verified, and edited by the author, who takes full responsibility for the accuracy and integrity of the work.

\appendix
\renewcommand{\thefigure}{A.\arabic{figure}}
\setcounter{figure}{0}
\section{The Four Circuits}
\label{sec:circuits}

Figures~\ref{fig:sc_hom} to~\ref{fig:el_2ph} give the four encodings at
$m = 2$, with only the two preparations boxed and every other gate drawn as
built. The gaps marked \emph{reindex} are where the shift-control qubit $s$ is
cleared, the \textsc{prep} index is recoded for the next term, and $s$ is set
again. 

\begin{figure}[H]
\centering
\includegraphics[width=\textwidth]{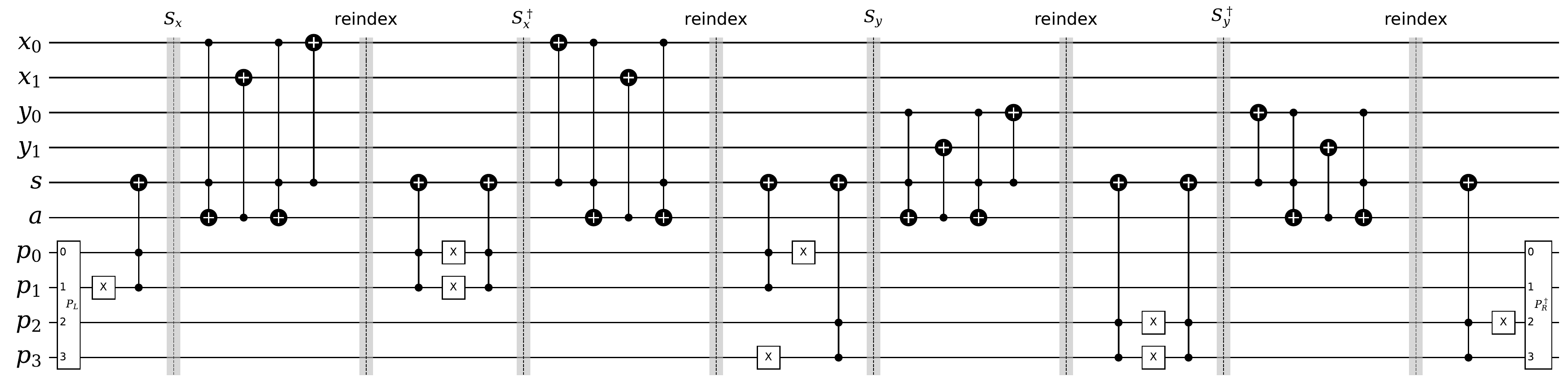}
\caption{Scalar homogeneous cell at $m = 2$, $L = 9$,
$\alpha = 16/3$, on $3m+4 = 10$ qubits. The four bracketed stages
$S_x$, $S_x^\dagger$, $S_y$, $S_y^\dagger$ are the two multiplexers of
Section~\ref{sec:select}, one per spatial register. Within each, the Toffoli
before and after computes and uncomputes the \textsc{select} control from a
two-bit field, and the $X$ gates around it choose which value of that field the
shift controls.}
\label{fig:sc_hom}
\end{figure}

\begin{figure}[H]
\centering
\includegraphics[width=\textwidth]{figs/fig_sc_hom.pdf}
\caption{Scalar two-phase cell at $m = 2$, $L = 25$, $\alpha = 16$ at
$E_1/E_2 = 3$, on
$3m+7 = 13$ qubits. The first half is Figure~\ref{fig:sc_hom}; the three
brackets $S_{ab}$, $R_\chi$, $S_{ab}^\dagger$ at the right are the conjugation
of Equation~\eqref{eq:Rab} read right to left, a conditional shift on each
spatial register, the oracle controlled on $g$, and the shifts undone. }
\label{fig:sc_2ph}
\end{figure}

\begin{figure}[H]
\centering
\includegraphics[width=\textwidth]{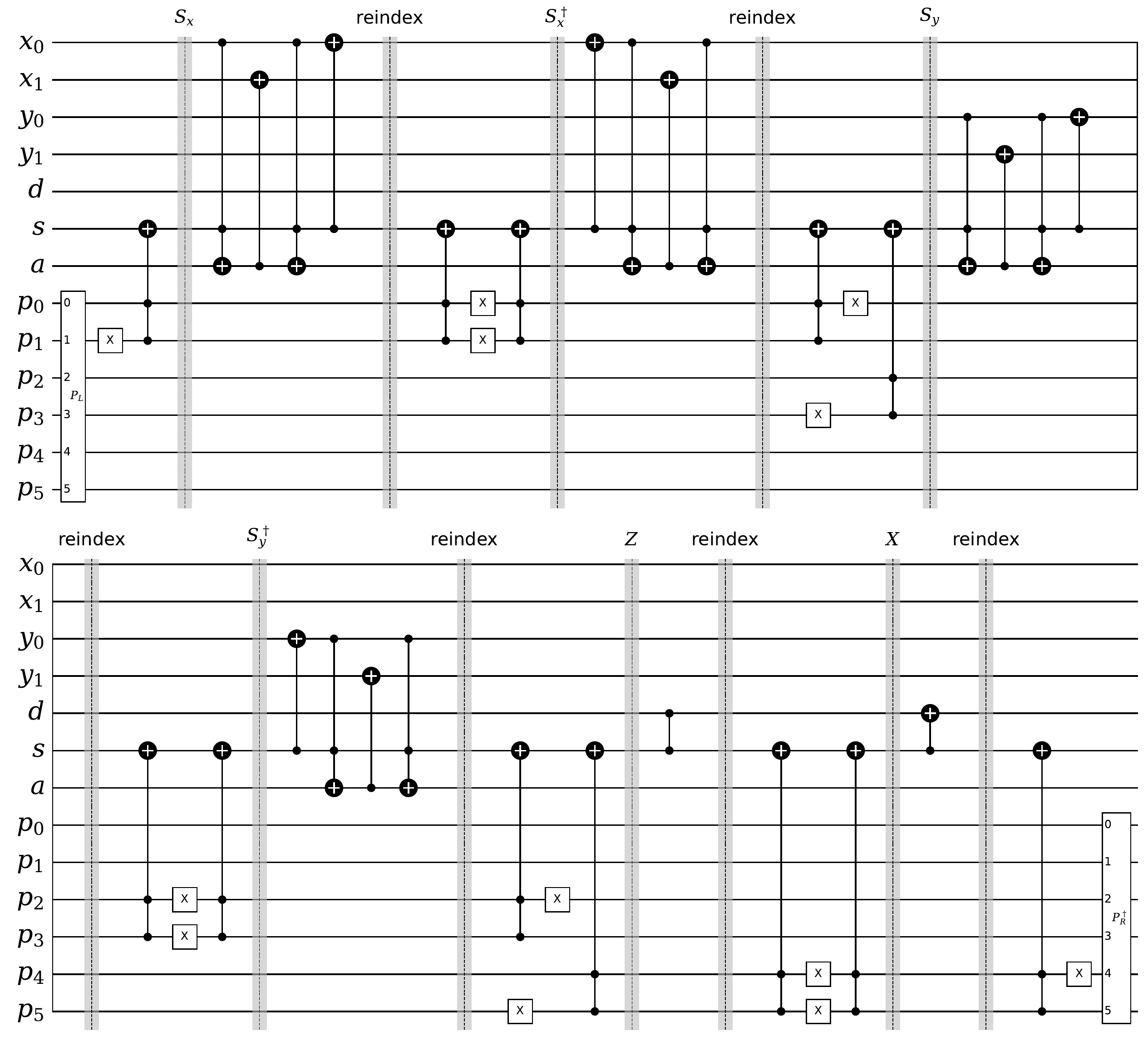}
\caption{Elasticity homogeneous cell at $m = 2$, $L = 17$,
$\alpha = 6.099$ at $\nu = 0.3$, on $3m+7 = 13$ qubits. }
\label{fig:el_hom}
\end{figure}

\begin{figure}[H]
\centering
\includegraphics[width=\textwidth]{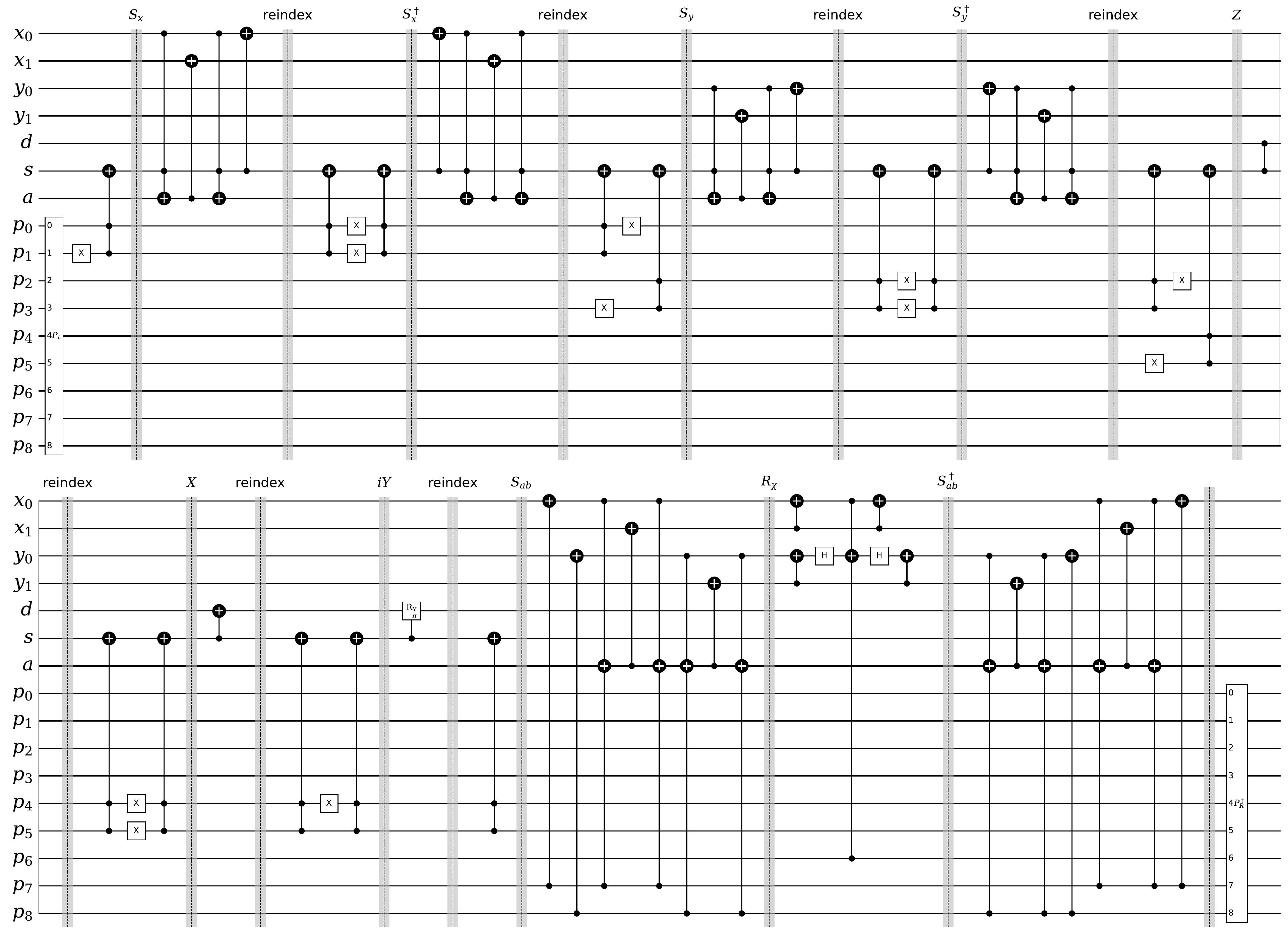}
\caption{Elasticity two-phase cell at $m = 2$, $L = 57$, $\alpha = 19.121$ at
$\nu = 0.3$ and $E_1/E_2 = 3$, on $3m+10 = 16$ qubits. Every stage of the
construction appears: the four spatial multiplexers, the dof
multiplexer $Z$, $X$, $iY$, and the reflection stage, whose three brackets
$S_{ab}$, $R_\chi$, $S_{ab}^\dagger$ are the conjugation of
Equation~\eqref{eq:Rab} read right to left. }
\label{fig:el_2ph}
\end{figure}

\clearpage

\printcredits

\bibliographystyle{unsrt}
\bibliography{ReferencesAppliedQC}

@article{cuccaro2004ripple,
  author  = {Steven A. Cuccaro and Thomas G. Draper and Samuel A. Kutin and
             David Petrie Moulton},
  title   = {A new quantum ripple-carry addition circuit},
  journal = {arXiv preprint quant-ph/0410184},
  year    = {2004}
}

@article{gidney2018halving,
  author  = {Craig Gidney},
  title   = {Halving the cost of quantum addition},
  journal = {Quantum},
  volume  = {2},
  pages   = {74},
  year    = {2018},
  doi     = {10.22331/q-2018-06-18-74}
}

@article{suresh2026pyencode,
  title={PyEncode: an open-source library for structured quantum state preparation},
  author={Suresh, Krishnan and Suresh, Sanjay},
  journal={Quantum Science and Technology},
  year={2026}
}

@article{lu2026memory,
  title={Memory-, Circuit-, and Ansatz-Efficient VQLS for CFD on Hybrid Quantum-HPC Systems},
  author={Lu, Chao and Meena, Muralikrishnan Gopalakrishnan and Perez, Eduardo Antonio Coello and Gottiparthi, Kalyana Chakravarthi and Kim, Seongmin},
  journal={arXiv preprint arXiv:2608.09661},
  year={2026}
}

@article{georges2025pauli,
  title     = {Pauli decomposition via the fast {W}alsh--{H}adamard transform},
  author    = {Georges, Timothy N. and Berntson, Bjorn K. and
               S{\"u}nderhauf, Christoph and Ivanov, Aleksei V.},
  journal   = {New Journal of Physics},
  volume    = {27},
  number    = {3},
  pages     = {033004},
  year      = {2025},
  publisher = {IOP Publishing},
  doi       = {10.1088/1367-2630/adb59b}
}

@article{liu2024towards,
  title={Towards quantum computational mechanics},
  author={Liu, Burigede and Ortiz, Michael and Cirak, Fehmi},
  journal={Computer Methods in Applied Mechanics and Engineering},
  volume={432},
  pages={117403},
  year={2024},
  publisher={Elsevier}
}

@article{givois2022qft,
  title={{QFT}-based homogenization},
  author={Givois, Felix and Kabel, Matthias and Gauger, Nicolas},
  journal={arXiv preprint arXiv:2207.12949},
  year={2022}
}

@article{wang2026qafe2,
  title={{QAFE2}: Quantum accelerated multiscale finite element analysis},
  author={Wang, Yiren and Ortiz, Michael and Cirak, Fehmi},
  journal={Computer Methods in Applied Mechanics and Engineering},
  volume={461},
  pages={119204},
  year={2026},
  publisher={Elsevier}
}

@article{balazi2026quantum,
  title={Quantum enhanced numerical homogenization},
  author={Balazi, Lo{\"\i}c and Deiml, Matthias and Peterseim, Daniel},
  journal={arXiv preprint arXiv:2603.28521},
  year={2026}
}

@article{van2000ubiquitous,
  title={The ubiquitous {Kronecker} product},
  author={Van Loan, Charles F},
  journal={Journal of computational and applied mathematics},
  volume={123},
  number={1-2},
  pages={85--100},
  year={2000},
  publisher={Elsevier}
}

@book{deville2002high,
  title={High-order methods for incompressible fluid flow},
  author={Deville, Michel O and Fischer, Paul F and Mund, Ernest H},
  volume={1},
  year={2002},
  publisher={Cambridge university press Cambridge}
}

@article{gnanasekaran2026efficient,
  title={Efficient quantum access model for sparse structured matrices using linear combination of “things”},
  author={Gnanasekaran, Abeynaya and Surana, Amit},
  journal={Physical Review A},
  volume={113},
  number={2},
  pages={022437},
  year={2026},
  publisher={APS}
}

@book{suresh2026appliedqc,
  title     = {Applied Quantum Computing for Engineers:
               Linear Solvers and Optimization},
  author    = {Suresh, Krishnan},
  edition   = {First},
  year      = {2026},  
  publisher = {Independently published},
  note      = {ISBN-13: 979-8244366846}}

@article{draper2000addition,
  title={Addition on a quantum computer},
  author={Draper, Thomas G},
  journal={arXiv preprint quant-ph/0008033},
  year={2000}
}

@article{pesce2021h2zixy,
  title={{H2ZIXY}: Pauli spin matrix decomposition of real symmetric matrices},
  author={Pesce, Rocco Monteiro Nunes and Stevenson, Paul D},
  journal={arXiv preprint arXiv:2111.00627},
  year={2021}
}

@article{hantzko2024tensorized,
  title={Tensorized {Pauli} decomposition algorithm},
  author={Hantzko, Lukas and Binkowski, Lennart and Gupta, Sabhyata},
  journal={Physica Scripta},
  volume={99},
  number={8},
  pages={085128},
  year={2024},
  publisher={IOP Publishing}
}

@inproceedings{koska2024tree,
  title={A tree-approach {Pauli} decomposition algorithm with application to quantum computing},
  author={Koska, Oc{\'e}ane and Baboulin, Marc and Gazda, Arnaud},
  booktitle={ISC High Performance 2024 Research Paper Proceedings (39th International Conference)},
  pages={1--11},
  year={2024},
  organization={Prometeus GmbH}
}

@article{liu2021variational,
  title={Variational quantum algorithm for the {Poisson} equation},
  author={Liu, Hai-Ling and Wu, Yu-Sen and Wan, Lin-Chun and Pan, Shi-Jie and Qin, Su-Juan and Gao, Fei and Wen, Qiao-Yan},
  journal={Physical Review A},
  volume={104},
  number={2},
  pages={022418},
  year={2021},
  publisher={APS}
}

@article{javadi2024quantum,
  title={Quantum computing with Qiskit},
  author={Javadi-Abhari, Ali and Treinish, Matthew and Krsulich, Kevin and Wood, Christopher J and Lishman, Jake and Gacon, Julien and Martiel, Simon and Nation, Paul D and Bishop, Lev S and Cross, Andrew W and others},
  journal={arXiv preprint arXiv:2405.08810},
  year={2024}
}

@article{harrow2009quantum,
  title={Quantum algorithm for linear systems of equations},
  author={Harrow, Aram W and Hassidim, Avinatan and Lloyd, Seth},
  journal={Physical review letters},
  volume={103},
  number={15},
  pages={150502},
  year={2009},
  publisher={APS}
}

@article{babbush2018encoding,
  title={Encoding electronic spectra in quantum circuits with linear {T} complexity},
  author={Babbush, Ryan and Gidney, Craig and Berry, Dominic W and Wiebe, Nathan and McClean, Jarrod and Paler, Alexandru and Fowler, Austin and Neven, Hartmut},
  journal={Physical Review X},
  volume={8},
  number={4},
  pages={041015},
  year={2018},
  publisher={APS}
}

@book{zienkiewicz2013finite,
  title={The finite element method: its basis and fundamentals},
  author={Zienkiewicz, Olgierd Cecil and Taylor, Robert Leroy and Zhu, Jian Z},
  year={2013},
  publisher={Elsevier}
}

@article{childs2017quantum,
  title={Quantum algorithm for systems of linear equations with exponentially improved dependence on precision},
  author={Childs, Andrew M and Kothari, Robin and Somma, Rolando D},
  journal={SIAM Journal on Computing},
  volume={46},
  number={6},
  pages={1920--1950},
  year={2017},
  publisher={SIAM}
}

@article{childs2012hamiltonian,
   title={Hamiltonian simulation using linear combinations of unitary operations},
   author={Childs, Andrew M and Wiebe, Nathan},
   journal={Quantum Information \& Computation},
   volume={12},
   number={11-12},
   pages={901--924},
   year={2012}
}

@article{deiml2024quantum,
  title={Quantum realization of the finite element method},
  author={Deiml, Matthias and Peterseim, Daniel},
  journal={arXiv preprint arXiv:2403.19512},
  year={2024}
}

@article{berry2015simulating,
  title={Simulating {Hamiltonian} dynamics with a truncated Taylor series},
  author={Berry, Dominic W and Childs, Andrew M and Cleve, Richard and Kothari, Robin and Somma, Rolando D},
  journal={Physical review letters},
  volume={114},
  number={9},
  pages={090502},
  year={2015},
  publisher={APS}
}

@article{camps2022fable,
  title   = {{FABLE}: Fast Approximate Quantum Circuits for Block-Encodings},
  author  = {Camps, Daan and Van Beeumen, Roel},
  journal = {2022 IEEE International Conference on Quantum Computing and Engineering (QCE)},
  pages   = {104--113},
  year    = {2022},
  publisher = {IEEE},
  doi     = {10.1109/QCE53715.2022.00029},
  note    = {arXiv:2205.00081}
}

@article{holscher2026end,
  title={End-to-end quantum algorithm for topology optimization in structural mechanics},
  author={H{\"o}lscher, Leonhard and Ahrend, Oliver and Karch, Lukas and L’Estocq, Carlotta and Andreu, Marc Marfany and Stollenwerk, Tobias and Wilhelm, Frank K and Kowalski, Julia},
  journal={Quantum Science and Technology},
  volume={11},
  number={2},
  pages={025029},
  year={2026},
  publisher={IOP Publishing}
}

@article{xu2025decomposition,
  title={Decomposition-free variational quantum linear solver: Application in computational mechanics},
  author={Xu, Yongchun and Hu, Heng},
  journal={Computer Methods in Applied Mechanics and Engineering},
  volume={447},
  pages={118396},
  year={2025},
  publisher={Elsevier}
}

@article{alkadri2025quantum,
  title={A quantum algorithm for the finite element method},
  author={Alkadri, Ahmad M and Kharazi, Tyler D and Whaley, K Birgitta and Mandadapu, Kranthi K},
  journal={arXiv preprint arXiv:2510.18150},
  year={2025}
}

@article{arora2025implementation,
  title={An implementation of the finite element method in hybrid classical/quantum computers},
  author={Arora, Abhishek and Ward, Benjamin M and Oskay, Caglar},
  journal={Finite Elements in Analysis and Design},
  volume={248},
  pages={104354},
  year={2025},
  publisher={Elsevier}
}

@article{Danz_2026,
   title={An Implementation of Quantum Oracles for the Finite Element Method},
   volume={48},
   ISSN={1095-7197},
   url={http://dx.doi.org/10.1137/25M1761513},
   DOI={10.1137/25m1761513},
   number={3},
   journal={SIAM Journal on Scientific Computing},
   publisher={Society for Industrial & Applied Mathematics (SIAM)},
   author={Danz, Sven and Stollenwerk, Tobias and Ciani, Alessandro},
   year={2026},
   month={June}, pages={B386–B419} }

@article{pechan2026block,
  title={Block encoding of the three-dimensional heterogeneous {Poisson} equation with application to fracture flow},
  author={Pechan, Austin and Golden, John and O’Malley, Daniel},
  journal={Physical Review Applied},
  volume={25},
  number={4},
  pages={044038},
  year={2026},
  publisher={APS}
}

@article{hogancamp2026linear,
  title={Linear-combination-of-unitaries decomposition for the {Laplace} operator},
  author={Hogancamp, Thomas and Demirdjian, Reuben and Gunlycke, Daniel},
  journal={Physical Review A},
  volume={114},
  number={1},
  pages={012455},
  year={2026},
  publisher={APS}
}

@article{mahmud2026moment,
  title={Moment-Structured Block Encodings of Periodic Finite-Difference Operators},
  author={Mahmud, Jishnu and Herrman, Rebekah},
  journal={arXiv preprint arXiv:2607.11596},
  year={2026}
}

@article{yano2026quantum,
  title={Quantum framework for parameterizing partial differential equations via diagonal block-encoding},
  author={Yano, Hiroshi and Sato, Yuki},
  journal={arXiv preprint arXiv:2603.01358},
  year={2026}
}

@article{camps2024explicit,
  title   = {Explicit Quantum Circuits for Block Encodings of Certain Sparse Matrices},
  author  = {Camps, Daan and Lin, Lin and Van Beeumen, Roel and Yang, Chao},
  journal = {SIAM Journal on Matrix Analysis and Applications},
  volume  = {45},
  number  = {1},
  pages   = {801--827},
  year    = {2024},
  publisher = {SIAM},
  doi     = {10.1137/22M1484298},
  note    = {arXiv:2203.10236}
}

@article{sunderhauf2024block,
  title   = {Block-encoding structured matrices for data input in quantum computing},
  author  = {S{\"u}nderhauf, Christoph and Campbell, Earl and Camps, Joan},
  journal = {Quantum},
  volume  = {8},
  pages   = {1226},
  year    = {2024},
  doi     = {10.22331/q-2024-01-11-1226},
  note    = {arXiv:2302.10949}
}

@article{kharazi2025explicit,
  title={Explicit block encodings of boundary value problems for many-body elliptic operators},
  author={Kharazi, Tyler and Alkadri, Ahmad M and Liu, Jin-Peng and Mandadapu, Kranthi K and Whaley, K Birgitta},
  journal={Quantum},
  volume={9},
  pages={1764},
  year={2025},
  publisher={Verein zur F{\"o}rderung des Open Access Publizierens in den Quantenwissenschaften}
}

@article{sturm2025efficient,
  title={Efficient and explicit block encoding of finite difference discretizations of the {Laplacian}},
  author={Sturm, Andreas and Schillo, Niclas},
  journal={ACM Transactions on Quantum Computing},
  year={2025},
  publisher={ACM New York, NY}
}

@misc{kuklinski2025structure,
  title         = {Efficient block-encodings require structure},
  author        = {Kuklinski, Parker and Rempfer, Benjamin and Elenewski, Justin and Obenland, Kevin},
  year          = {2025},
  eprint        = {2509.19667},
  archivePrefix = {arXiv},
  primaryClass  = {quant-ph},
  note          = {arXiv:2509.19667}
}

@inproceedings{gilyen2019quantum,
  title={Quantum singular value transformation and beyond: exponential improvements for quantum matrix arithmetics},
  author={Gily{\'e}n, Andr{\'a}s and Su, Yuan and Low, Guang Hao and Chuang, Isaac L},
  booktitle={Proceedings of the 51st Annual ACM SIGACT Symposium on Theory of Computing (STOC)},
  pages={193--204},
  year={2019},
  doi={10.1145/3313276.3316366},
  note={arXiv:1806.01838}
}

@article{costa2022optimal,
  title={Optimal scaling quantum linear-systems solver via discrete adiabatic theorem},
  author={Costa, Pedro CS and An, Dong and Sanders, Yuval R and Su, Yuan and Babbush, Ryan and Berry, Dominic W},
  journal={PRX Quantum},
  volume={3},
  number={4},
  pages={040303},
  year={2022},
  publisher={APS}
}

 
\end{document}